\newif\ifabstract
\newif\iffull
\newtheorem{theorem}{Theorem}
\newtheorem{heuristic algorithm}{Heuristic Algorithm}
\newtheorem{lemma}{Lemma}
\newtheorem{definition}{Definition}
\newtheorem{corollary}{Corollary}
\newtheorem{proposition}{Proposition}
\newtheorem{example}{Example}
\newtheorem{conjecture}{Conjecture}
\newtheorem{approximation algorithm}{Aprroximation Algorithm}
\newcommand{\R}{\mathbb{R}}
\newcommand{\vol}{\mathrm{vol}}
\newcommand{\Con}{\Psi_n}
\newcommand{\UC}{\mathcal{BT}_n}
\newcommand{\NC}{\mathcal{NC}_n}
\newcommand{\RF}{\mathcal{RF}_n}
\newcommand{\Conb}{\Psi}
\newcommand{\UCb}{\mathcal{BT}}
\newcommand{\SK}{\mathsf{SK}}
\newcommand{\indicator}{\mathbf{1}}
\newcommand{\Network}{\mathcal{N}}
\newcommand{\Conv}{\mathsf{Conv}}
\newcommand{\calA}{\mathcal{A}}
\newcommand{\calB}{\mathcal{B}}
\newcommand{\calS}{\mathcal{S}}
\newcommand{\calT}{\mathcal{T}}
\renewcommand{\span}{\mathrm{Span}}
\newcommand{\LP}{\mathsf{LP}}
\newcommand{\eat}[1]{}
\begin{document}
\bibliographystyle{plain}

%\title{On the Existence of an Object with Projected Volume Constraints}
%\title{Inequalities for the Volumes of Orthogonal Projections}
\title{On the Inequalities of Projected Volumes  and the Constructible Region}

\author{Zihan Tan\thanks{Department of Computer Science, University of Chicago. Email: {\tt zihantan@uchicago.edu}.}\and Liwei Zeng\thanks{Department of Industrial Engineering and Management Sciences, Northwestern University. Email: {\tt liweizeng2015@u.northwestern.edu}.}}

%\author{Zihan Tan \text{;   }Liwei Zeng}

%\author[1]{\rm Zihan Tan}
%\author[2]{\rm Liwei Zeng}
%\affil[1]{The University of Chicago}
%\affil[2]{Northwestern University}

\begin{titlepage}
\thispagestyle{empty}
\maketitle
\begin{abstract}
We study the following geometry problem: given a $2^n-1$ dimensional vector $\pi=\{\pi_S\}_{S\subseteq [n], S\ne \emptyset}$, is there
an object $T\subseteq\R^n$ such that $\log(\vol(T_S))= \pi_S$, for all $S\subseteq [n]$, where
$T_S$ is the projection of $T$ onto the subspace spanned by the axes in $S$ and $\vol(T_S)$ is its $|S|$-dimensional volume?
If $\pi$ does correspond to an object in $\R^n$, we say that $\pi$ is {\em constructible}.
We use $\Con$ to denote the constructible region, i.e., the set of all constructible vectors in $\R^{2^n-1}$.
In 1995, Bollob\'{a}s and Thomason showed that $\Con$ is contained in a polyhedral cone and defined a class of so called
uniform cover inequalities.
We propose a new set of inequalities, called nonuniform-cover inequalities, which generalizes the uniform cover inequalities.
We show that any linear inequality that all points in $\Con$ satisfy must be a nonuniform-cover inequality.
Based on this result and an example by Bollob\'{a}s and Thomason,
we show that the constructible region $\Con$ is non-convex for $n\geq 4$, and thus cannot be fully characterized by linear inequalities.
We further show that some subclasses of the nonuniform-cover inequalities are not satisfied by all constructible vectors via 
various combinatorial constructions, which refutes a previous conjecture about $\Con$.
Finally, we conclude with an interesting conjecture regarding the convex hull of $\Con$.
\end{abstract}

\thispagestyle{empty}

\end{titlepage}

\section{Introduction}
\label{sec:intro}

Let $T$ be an object in $\mathbb{R}^n$ and let $\{e_1,\cdots,e_n\}$ be the standard basis of $\mathbb{R}^n$.
By an object, we mean a compact subset of $\mathbb{R}^n$.
%\footnote{
%Different from \cite{bollobas1995projections}, we do not require an object to be the closure of its interior. For example,
%we allow an line segment to be an object in $\R^2$, which may have nonzero 1-dimensional projected volume, but zero 2-dimensional volume.
%}.
Let $\span(S)$ be the subspace spanned by $\{e_i\mid i\in S\}$.
Given an index set $S\subseteq[n]=\{1,2,\cdots,n\}$ with $|S|=d$, we denote by $T_S$ the orthogonal projection of $T$ onto $\span(S)$, and by $|T_S|$ its $d$-dimensional volume.
We use $|T|$ to denote the $n$-dimensional volume of $T$.
%We call a non-negative $2^n-1$ dimensional vector $\pi$ {\em projected volume vector}.
Given an $n$-dimensional object $T$, define $\pi(T)$ to be the {\em log-projection vector} of $T$,
which is a $2^n-1$ dimensional vector with entries indexed by all nonempty subsets of $[n]$ and $\pi(T)_{S}=\log |T_S|$ for all $S\subseteq [n], S\neq \emptyset$\footnote{In this paper all logarithms are to the base of $2$, and we use the convention that $\log 0 = -\infty$.}.
Whenever we refer to a $2^n-1$ dimensional vector $\pi$, we assume that the entries are indexed by the nonempty subsets of $[n]$
(i.e., $\pi_S$ is the entry indexed by $S\subseteq [n]$).
We say that a $2^n-1$ dimensional vector $\pi$ is {\em constructible} if $\pi$ is the log-projection vector of an object $T$ in $\R^n$.
We define the constructible region $\Con$, the central subject studied in this paper, to be the set of all constructible vectors:
$$
\Con=\bigg\{\pi\in \mathbb{R}^{2^n-1}\text{ }\bigg| \text{  } \pi\text{ is constructible}\bigg\}.
$$
With above definitions, it is natural to ask the following questions:
\begin{enumerate}
\item Given a $2^n-1$ dimensional vector $\pi$, is there an algorithm to decide whether $\pi$ is in $\Con$?
\item What is the geometric structure of $\Con$? What properties does $\Con$ have?
%\item If the answer is yes, can we construct such an object?
\end{enumerate}

In 1995, Bollob\'{a}s and Thomason~\cite{bollobas1995projections} proposed a class of inequalities
relating the projected volumes.  Their result reads as follows.
Let $\calA$ be a family of subsets of $S$.
We say $\calA$ is
a $k$-cover of $S$, if each element of $S$ appears exactly $k$ times in the multiset induced by $\calA$.
For example, $\{\{1,2\},\{2,3\},\{1,3\}\}$ is a 2-uniform cover of $\{1,2,3\}$.
%The correctness of these inequalities was proved by using H\"{o}lder inequality and integral techniques.
%Since this type of inequalities is always related to a uniform cover, we denote it by {\em BT inequality} in this paper.

\begin{theorem}(Bollob\'{a}s-Thomason (BT) uniform-cover inequalities)
\label{thm:BT}
Let $T$ be an object in $\R^n$ and let $\calA$ be a $k$-cover of $[n]$, we have
$
|T|^k\le \prod_{A\in \calA} |T_A|.
$
\end{theorem}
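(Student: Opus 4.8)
The plan is to reduce the continuous statement to a purely combinatorial counting inequality on a grid, and then to prove the latter using the subadditivity of Shannon entropy (Shearer's lemma).

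First I would discretize. Fix a small $\epsilon>0$, overlay the axis-aligned grid of side $\epsilon$ on $\R^n$, and let $F\subseteq \epsilon\mathbb{Z}^n$ be the set of grid cells meeting $T$. Then $\epsilon^n|F|$ approximates $\vol(T)$, and for each $A\subseteq[n]$ the number of cells in the projection of $F$ onto $\span(A)$, call it $F_A$, satisfies $\epsilon^{|A|}|F_A|\to \vol(T_A)$ as $\epsilon\to 0$, by compactness of $T$ and regularity of Lebesgue measure. Because each coordinate in $[n]$ is covered exactly $k$ times we have $\sum_{A\in\calA}|A| = kn$, so the powers of $\epsilon$ cancel between the two sides; hence it suffices to prove the discrete inequality $|F|^k \le \prod_{A\in\calA}|F_A|$ for every finite $F\subseteq \mathbb{Z}^n$.

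For the discrete inequality I would invoke Shearer's lemma. Let $X=(X_1,\dots,X_n)$ be uniform on $F$, so $H(X)=\log|F|$, where $H$ is Shannon entropy, and for $A\subseteq[n]$ write $X_A=(X_i)_{i\in A}$; since the support of $X_A$ is the projection of $F$, we have $H(X_A)\le \log|F_A|$. Shearer's inequality states that if each coordinate of $[n]$ lies in at least $k$ members of $\calA$, then $k\,H(X)\le \sum_{A\in\calA}H(X_A)$, which applies directly to a $k$-cover. Combining, $k\log|F| = kH(X)\le \sum_{A}H(X_A)\le \sum_A \log|F_A|$, and exponentiating yields the discrete inequality. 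Shearer's lemma itself follows from the entropy chain rule together with the fact that conditioning does not increase entropy, so I would include its short proof for completeness.

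The step I expect to be the main obstacle is the limiting argument, not the combinatorial core. One must verify that the projected cell-counts converge to the projected volumes; the subtlety is that the projection of an $\epsilon$-thickening of $T$ can strictly overestimate $T_A$, so I would use an inner/outer sandwich and continuity of Lebesgue measure as $\epsilon\to 0$, and separately dispatch the degenerate case where some $\vol(T_A)=0$ (the convention $\log 0=-\infty$ makes the inequality hold trivially there). Alternatively, one can bypass discretization entirely: the pointwise bound $\indicator_T(x)\le \prod_{A\in\calA}\indicator_{T_A}(x_A)$ combined with the generalized H\"older (Finner) inequality $\int_{\R^n}\prod_{A}g_A(x_A)\,dx\le \prod_{A}\left(\int g_A^{k}\right)^{1/k}$, applied to $g_A=\indicator_{T_A}$, gives $\vol(T)\le\prod_A \vol(T_A)^{1/k}$ at once; there the difficulty migrates to proving Finner's inequality by induction on $n$ via H\"older, where the $k$-cover condition is precisely what makes the conjugate exponents sum correctly.
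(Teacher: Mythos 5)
Your proposal is correct, but it cannot be compared to a proof in the paper because the paper contains none: the statement is quoted as a known result of Bollob\'as and Thomason \cite{bollobas1995projections}, and the authors only remark in their related-work discussion that ``BT inequalities can be easily derived from the well known Shearer's entropy inequalities'' --- which is precisely the argument you have written out. Your discretization is sound given the paper's standing assumption that an object is a bounded \emph{compact} subset of $\R^n_+$: the $\epsilon$-cells meeting $T_A$ lie in a $\sqrt{|A|}\,\epsilon$-neighborhood of the compact set $T_A$, so continuity of Lebesgue measure from above gives $\epsilon^{|A|}|F_A|\to \vol(T_A)$, while $\epsilon^{n}|F|\ge \vol(T)$ holds in the direction needed on the left; the exponent bookkeeping $\sum_{A\in\calA}|A|=kn$ uses exactness of the $k$-cover, and Shearer's lemma only needs ``at least $k$,'' so that step is fine, as is your dispatch of the degenerate case $\vol(T_A)=0$ (where $\vol(T)=0$ follows since $T$ sits over $T_A$ within a bounded region). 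It is worth noting how your route differs from Bollob\'as and Thomason's original proof, which is geometric rather than entropic: they argue by compressions and induction, obtaining the stronger \emph{box theorem} (every body can be replaced by a rectangle of equal volume whose projections are no larger) together with equality-case information. That extra strength is not idle here --- the present paper's proof of its Theorem 4 invokes ``Theorem 4 in \cite{bollobas1995projections}'' exactly to force a projection achieving equality to be a specific rectangle, something the entropy proof (and likewise your Finner/generalized-H\"older alternative) does not deliver. What your approach buys instead is brevity and immediate generalization to fractional covers, i.e.\ the weighted form matching the paper's FNC inequalities, since Shearer and Finner both accept fractional weights.
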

With the above notations, we define the polyhedron cone
$$
\UC = \bigg\{ \pi \in \R^{2^n-1}\text{ }\bigg|\text{ } k\pi_{S}\leq \sum_{A\in \calA} \pi_A, \text{ for all }(k,\calA,S)\text{ such that }S\subseteq [n]\text{ and }\calA\text{ } k\text{-covers } S \bigg\}.
$$
BT inequalities essentially assert that
every constructible vector is in $\UC$, or equivalently $\Con\subseteq \UC$.
In the very same paper~\cite{bollobas1995projections}, they also found a {\em non-constructible} vector in $\UCb_4$, which
implies that $\Con\varsubsetneq \UC$ for $n\ge 4$.
However, their results do not rule out the possibility that $\Con$ is convex, or even can be characterized by
a finite set of linear inequalities.

\subsection{Our Results}
Besides the results mentioned above, very little is known about $\Con$ and the main
goal of this paper is to deepen our understanding about its structure.
We first propose a new class of inequalities, called nonuniform-cover inequalities, which generalizes
the BT uniform-cover inequalities.
The following notations are used (throughout the paper) to define nonuniform-cover inequalities.

Let $\calA=\{A_i\}_{i=1}^{k}$, $\calB=\{B_j\}_{j=1}^{m}$ be two families of subsets\footnote{
A subset of $[n]$ may appear multiple times in $\calA$ or $\calB$.
} of $[n]$,
where $A_i$ and $B_j$ are subsets of $[n]$.
We say $\calA$ {\em covers} $\calB$ if:

\begin{enumerate}

 \item[P1.] The disjoint union of $\{A_i\}_{i=1}^{k}$ is the same as the disjoint union of $\{B_j\}_{j=1}^{m}$.
 In other words, for every element $e\in[n]$,
 $|\{i\mid e\in A_i\}| = |\{j\mid e\in B_j\}|$.

 \item[P2.] Let $\Sigma=\{(A_i,t)\mid t\in A_i\}$ and $\Lambda=\{(B_j,s)\mid s\in B_j\}$,
 there exists an one-to-one mapping $f$ between $\Sigma$ and $\Lambda$ such that:
 for any $(A_i,t)\in \Sigma$ with $(B_j,s)=f(A_i,t)$, $t=s$ and $A_i\subseteq B_j$.
\end{enumerate}

\begin{definition}(Nonuniform-Cover (NC) inequalities)
Let $x$ be a $2^n-1$ dimensional vector indexed by nonempty subsets of $[n]$ and assume that $\calA$ covers $\calB$.
A nonuniform-cover inequality is defined as:
\[\prod_{A_i\in \calA}x_{A_i}\ge \prod_{B_j\in\calB} x_{B_j}.\]
\end{definition}

\begin{example}
\label{ex:1}
Let $\calA=\{\{1,2\}, \{2,3\}, \{3,4\} \}$ and
$\calB=\{\{1,2,3\}, \{2,3,4\}\}$.
We can see $\calA$ covers $\calB$.
The corresponding NC inequality is
$x_{\{1,2\}}\cdot x_{\{2,3\}}\cdot x_{\{3,4\}} \geq x_{\{1,2,3\}}\cdot x_{\{2,3,4\}}$.
Here is another example:
$
x_{\{1\}}\cdot x_{\{1,2\}}\cdot x_{\{2,3\}}\cdot x_{\{3,4\}}\cdot x_{\{2,4\}} \geq x_{\{1,2,3\}}\cdot x_{\{2,3,4\}}\cdot x_{\{1,2,4\}}.
$
\end{example}

\noindent
When the context is clear, we refer to a linear inequality
of the form $\sum_{A_i\in\calA}\pi_{A_i}\geq \sum_{B_j\in\calB} \pi_{B_j}$
as a NC inequality as well.
We say that the NC inequality $\sum_{A_i\in\calA}\pi_{A_i}\geq \sum_{B_j\in\calB} \pi_{B_j}$ is the linear form of the NC inequality $\prod_{A_i\in \calA}x_{A_i}\ge \prod_{B_j\in\calB} x_{B_j}$.
And we say that an object $T\subseteq \mathbb{R}^n$ satisfies the NC inequality $\prod_{A_i\in \calA}x_{A_i}\ge \prod_{B_j\in\calB} x_{B_j}$ if $\prod_{A_i\in \calA}|T_{A_i}|\ge \prod_{B_j\in\calB} |T_{B_j}|$, or equivalently $\sum_{A_i\in\calA}\pi(T)_{A_i}\geq \sum_{B_j\in\calB} \pi(T)_{B_j}$ (i.e., its log-projection vector satisfies the linear form of the NC inequality). 
It is not hard to see that that every BT inequality is an NC inequality.
But the converse may not be true. For example,
%let $\calA=\{(\{1,2\},1),(\{2,3\},1),(\{3,4\},1)\}$,
%$\calB=\{(\{1,2,3\},1),$ $(\{2,3\},1), (\{2,3,4\},1)\}$.
%We can see $\calA$ covers $\calB$ and the corresponding inequality is
$
x_{\{1,2\}}\cdot x_{\{2,3\}}\cdot x_{\{3,4\}}\ge x_{\{1,2,3\}}\cdot x_{\{2,3,4\}}.
$
(We alert the reader that we do not claim such inequalities are correct for all constructible vectors, and in fact some NC inequalities are not satisfied by all constructible vectors. We will discuss it in detail in Section~\ref{sec:counterex}.)

Similar to $\UC$, we define $\NC$ to be the set of all $2^n-1$ dimensional vectors that satisfy all NC inequalities:
Formally, it is the following polyhedral cone:
$$
\NC = \bigg\{ \pi \in \R^{2^n-1} \text{ }\bigg|\text{ } \sum_{B_j\in\calB}\pi_{B_j}\leq \sum_{A_i\in \calA} \pi_{A_i},
\text{ for all }\calA, \calB\subseteq 2^{[n]} \text{ such that } \calA \text{ covers } \calB \bigg\}.
$$
Our first result states that all correct linear inequalities must be in this class.
\begin{theorem}
\label{thm:nonuniform}
If all points in $\Con$ satisfy a linear inequality $\sum_{S\subseteq [n]} \alpha_S \pi_{S}\leq 0$,
the linear inequality must be a NC inequality, or a positive combination of NC inequalities.
\end{theorem}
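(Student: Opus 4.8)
The plan is to show that the coefficient vector $\alpha$ of any valid inequality lands in the cone generated by NC inequalities by extracting, from validity alone, exactly the two combinatorial ingredients of a covering. First I would exploit the invariances of $\Con$: scaling an object along the $i$-th axis by $\lambda>0$ sends $\pi_S\mapsto\pi_S+(\log\lambda)\mathbf{1}[i\in S]$, so $\Con$ is invariant under translation by $\pm e^{(i)}$, where $e^{(i)}_S=\mathbf{1}[i\in S]$. Since these are lineality directions, validity forces $\langle\alpha,e^{(i)}\rangle=\sum_{S\ni i}\alpha_S=0$ for every $i$. This is precisely the balance condition (property~1 of covering), and it lets me write $\alpha=\alpha^+-\alpha^-$ so that the positive part induces a family $\calB$ and the negative part a family $\calA$, each using every element equally often.

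The remaining content is the directional information (property~2). Conceptually, by the bipolar theorem, ``every valid $\alpha$ lies in the NC cone $N$'' is equivalent to $\NC\subseteq\overline{\mathsf{cone}}(\Con)$, i.e.\ every direction satisfying all NC inequalities is a limit of nonnegatively scaled constructible vectors; concretely I would derive from validity the \emph{filter inequalities} $\sum_{S\in\mathcal U}\alpha_S\ge0$, one for each up-set $\mathcal U$ of nonempty subsets of $[n]$. The tool is an explicit family of objects: for $D\subseteq[n]$ let $C_D=\{x\in[0,1]^n: x_i\le\epsilon\ \forall i\in D\}$ be a thin coordinate slab, and form the \emph{union} $T_\epsilon=\bigcup_l C_{D_l}$. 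A projection $|(T_\epsilon)_S|$ remains full exactly when $S$ misses some $D_l$, and shrinks as $\epsilon\to0$ exactly when $S$ hits every $D_l$; taking the $D_l$ to be the complements of the maximal non-$\mathcal U$ sets makes the shrinking family equal $\mathcal U$. Letting $\epsilon\to0$ produces a constructible recession direction $d$ with $d_S<0$ precisely for $S\in\mathcal U$, and testing $\alpha$ against $d$ yields a filter inequality. (I would stress that single ``functional'' slab constructions detect only \emph{principal} filters and run into a coupling obstruction; it is the union structure that defeats it.)

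Finally I would assemble. Balance together with filter positivity is exactly the hypothesis under which the bipartite inclusion graph—matching each incidence $(A,t)$ of a $\calA$-set to an incidence $(B,t)$ of a $\calB$-set with $A\subseteq B$—satisfies Hall's condition at every element $t$. Indeed a max-flow/min-cut computation shows that the binding cuts are indexed by filters and that the cut value is nonnegative precisely when $\sum_{S\in\mathcal U}\alpha_S\ge0$. These per-element matchings realize the bijection $f$ in the definition of covering, exhibiting an integral $\alpha$ (after a positive rescaling) as a single NC inequality, and a general real $\alpha$ as a nonnegative combination of the integral extreme rays, each of which is an NC inequality.

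The step I expect to be the main obstacle is the second one: controlling the \emph{rates} of shrinkage. A single union-of-slabs object shrinks $|(T_\epsilon)_S|$ at rate $\epsilon^{r(S)}$ with $r(S)=\min_l|S\cap D_l|\ge1$, and $r(S)$ need not be constant on $\mathcal U$, so one object typically realizes only a \emph{weighted} filter direction $-\sum_{S\in\mathcal U} r(S)\,\mathbf{1}[S\in\mathcal U]$ rather than the clean $-\mathbf{1}_{\mathcal U}$. Showing that the conic hull of all such constructions (together with the box lineality) is exactly $\NC$—equivalently, that the weighted filter inequalities one actually obtains still imply every unweighted filter inequality and mesh with the integrality required to build the bijection $f$—is the delicate combinatorial heart of the argument, and is where I would concentrate the effort.
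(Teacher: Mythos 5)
There is a genuine gap, and it sits exactly where you predicted it would, but it is worse than a ``delicate step'': the clean filter lemma your argument pivots on is false. You claim validity forces $\sum_{S\in\mathcal{U}}\alpha_S\geq 0$ for \emph{every} up-set $\mathcal{U}$ of nonempty subsets of $[n]$. Take $n=2$ and the BT inequality $\pi_{\{1,2\}}\leq \pi_{\{1\}}+\pi_{\{2\}}$, i.e.\ $\alpha_{\{1,2\}}=1$, $\alpha_{\{1\}}=\alpha_{\{2\}}=-1$: it is valid on $\Con$ and satisfies your balance conditions $\sum_{S\ni i}\alpha_S=0$, yet for the up-set $\mathcal{U}$ of all nonempty subsets the sum is $-1<0$. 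Consistently, no constructible recession direction $-\mathbf{1}_{\mathcal{U}}$ can exist for such $\mathcal{U}$, which is why your slab unions only ever give the weighted direction $d_S=-\min_l|S\cap D_l|$. So the obstruction you flag at the end is not a technicality to be engineered around: the unweighted filter inequalities you feed into the Hall/max-flow assembly are simply not consequences of validity, and the assembly step therefore has no valid hypothesis. As written, the proposal proves only the (correct) balance condition plus a family of weighted inequalities whose sufficiency you explicitly leave open; the statement you would need to close the loop is false.

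The correct dual description is \emph{per-element}: for each $i\in[n]$ and each up-set $\mathcal{U}$ all of whose members contain $i$, one has $\sum_{S\in\mathcal{U}}\alpha_S\geq 0$. This is what the paper's machinery delivers. In the network $\Network(\calA,\calB)$, arcs exist only between incidences $(A_i,x)$ and $(B_j,y)$ with $x=y$, so the min-cut/Hall analysis decomposes element by element, and the binding cuts are indexed exactly by these per-element up-sets; the separating objects are rectangular flowers, in which only the axis-$i$ edge lengths of the faces $F_S$ are shrunk (or grown, as in the paper's construction $F_{S,x}=t$ on a down-set from the min cut), independently across faces subject only to the cornered monotonicity $f_{S,i}\geq f_{S',i}$. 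Your slab unions instead couple shrink rates across coordinates globally, which is precisely why they miss the extreme directions. Notably, your construction is salvageable under the per-element restriction: if every member of $\mathcal{U}$ contains $i$, then $[n]\setminus\{i\}$ is a maximal non-member, so $D=\{i\}$ appears among your slabs and all rates $\min_l|S\cap D_l|$ collapse to $0/1$, yielding the clean direction. With that fix, balance plus per-element filter positivity is equivalent to the paper's saturation condition, and your remaining steps (integral max-flow for rational coefficients, the Farkas/rational-cone argument for real ones) match the paper's Lemma~\ref{lm:FNC-NC}. So the architecture --- recession directions, flow duality, conic decomposition --- is essentially the paper's, run in the dual direction; the missing idea is the per-element restriction, and without it the central lemma fails.
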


In order to prove Theorem \ref{thm:nonuniform}, we introduce a class of objects called {\em rectangular flowers}.
Let $\RF$ be the set of log-projection vectors generated by all rectangular flowers
(see the definition in Section~\ref{sec:proof1}).
For any linear inequality that is not an NC inequality or a positive combination of NC inequalities,
we construct a rectangular flower that violates the inequality.
It is not hard to show that the log-projection vector of a rectangular flower in $\R^n$ satisfies
all nonuniform cover inequalities (i.e., it is in $\NC$).
Moreover, we show that for every point $\pi\in \NC$,
there exists a rectangular flow in $\R^n$ whose log-projection vector is equal to $\pi$.
Therefore, we have the following theorem.
\begin{theorem}
\label{thm:recflower}
For all $n\geq 1$, $\NC=\RF\subseteq \Con$.
\end{theorem}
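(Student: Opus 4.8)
The plan is to prove the statement as the conjunction of three inclusions, $\RF\subseteq\Con$, $\RF\subseteq\NC$, and $\NC\subseteq\RF$: the last two give $\NC=\RF$, and the first places this common set inside $\Con$. The inclusion $\RF\subseteq\Con$ is essentially definitional, since a rectangular flower is by construction a bounded compact object in $\R^n$, so its log-projection vector is constructible; the only thing I would check here is that every projection volume $|F_S|$ is finite and positive, which I would read directly off the definition in Section~2.

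For $\RF\subseteq\NC$ I would fix a rectangular flower $F$ and a covering pair $\calA,\calB$ (with $\calA$ covering $\calB$) and verify the additive form $\sum_{B_j\in\calB}\pi(F)_{B_j}\le\sum_{A_i\in\calA}\pi(F)_{A_i}$, using the explicit formula for the projection volumes of a flower. I would first treat the degenerate case of a single box, where $\pi(F)_S=\sum_{i\in S}c_i$ is modular: both sides then equal $\sum_i c_i\cdot(\text{multiplicity of }i)$, and Property~1 of the covering (equal element-multiplicities in $\calA$ and $\calB$) forces \emph{equality}. For a general flower I would use the structural decomposition of $F$ together with Property~2, namely the incidence-preserving injection sending each $(A_i,t)$ to some $(B_j,t)$ with $A_i\subseteq B_j$, to compare the contribution of each petal to each coordinate's extent term by term. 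Since $A_i\subseteq B_j$ controls exactly how each coordinate enters the two products, this should reduce to a direct (if slightly tedious) verification, consistent with the text's claim that this direction is simple.

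The substantial direction is $\NC\subseteq\RF$: given an arbitrary $\pi\in\NC$, I must build a rectangular flower $F$ with $\pi(F)=\pi$. I would try two routes and keep the cleaner one. The direct route reads the flower parameters (petal dimensions and placements) off the coordinates $\pi_S$ as explicit functions, then checks that each projection volume comes out to exactly $e^{\pi_S}$; here the hypothesis $\pi\in\NC$ should be precisely the feasibility condition, i.e.\ the NC inequalities are exactly the nonnegativity constraints on the derived petal side lengths. The structural route instead shows that $\RF$ is a closed convex cone — closed under positive scaling and under a product-type superposition of flowers that adds log-projection vectors coordinatewise — identifies generating rays of $\NC$, realizes each by a simple flower (a single box, or a staircase of boxes for the strict-subadditivity rays already visible for $n=2$), and then takes conic combinations.

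I expect $\NC\subseteq\RF$ to be the main obstacle, with two delicate points. First, pinning down the exact projection formula and showing the construction reproduces $\pi$ \emph{exactly} rather than merely up to inequality: the combinatorics of overlapping petals must be controlled so that no unintended overlap in a projection inflates some $|F_S|$. Second, on the structural route, I must verify that the superposition operation genuinely keeps the object a rectangular flower and genuinely adds the full $2^n-1$ dimensional log-projection vector simultaneously, and that $\NC$ is finitely generated with explicitly realizable rays; establishing closure of $\RF$ under this operation is the crux. In either route the realization of $\pi$ will hinge on invoking the NC inequalities for $\pi$, which is exactly what ties the geometric construction back to the defining inequalities of $\NC$.
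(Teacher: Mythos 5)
Your decomposition into the three inclusions is the right frame, and two of them are sound: $\RF\subseteq\Con$ is indeed definitional, and your verification of $\RF\subseteq\NC$ via the incidence-preserving injection $(A_i,t)\mapsto(B_j,t)$ with $A_i\subseteq B_j$ is exactly the intended argument (writing $\pi_S=\sum_{i\in S}f_{S,i}$ with the monotonicity $f_{S,i}\geq f_{S',i}$ for $S\subset S'$, each matched pair contributes $f_{A_i,t}\geq f_{B_j,t}$, and summing gives the inequality; the paper leaves this to the reader). But the substantial direction $\NC\subseteq\RF$ is where your proposal has a genuine gap: neither of your two routes is carried out, and both stall precisely at the step that requires a new idea. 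Your ``direct route'' asserts that the NC inequalities ``should be precisely the feasibility condition'' for reading petal lengths off $\pi$ --- but that identity of the dual cone of the feasibility LP with the NC cone \emph{is} the theorem, not an observation; there is no canonical closed-form for the petal lengths (the feasible set of $\LP(\pi)$ is generally a polytope, and extracting a point amounts to solving a flow-type problem). Your ``structural route'' requires identifying the extreme rays of $\NC$ and realizing each by a flower, which you do not do and which is not easier than the original problem; the closure properties of $\RF$ you worry about are actually the easy part (feasible solutions of $\LP(\pi)$ add and scale, so $\RF$ is immediately a convex, indeed polyhedral, cone).

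The missing idea is that Theorem~\ref{thm:nonuniform}'s proof already supplies the key tool: for \emph{every} non-FNC inequality, one can explicitly construct a rectangular flower violating it (via the min-cut structure of the network $\Network(\calA,\calB)$, setting petal edge lengths to $t$ or $1$ according to the source side of the cut). The paper's proof of $\NC\subseteq\RF$ is then a short separation argument: write $\RF=\{\pi\mid \LP(\pi)\text{ feasible}\}$, so $\RF$ is a polyhedral cone; if some $\pi\in\NC\setminus\RF$ existed, a separating hyperplane would give a linear inequality $\sum_S\alpha_S x_S\leq 0$ valid on all of $\RF$ but violated by $\pi$; since $\pi\in\NC$ satisfies every FNC inequality (by Lemma~\ref{lm:FNC-NC}, FNC inequalities are exactly nonnegative combinations of NC inequalities), the separating inequality cannot be FNC; but then the Theorem~\ref{thm:nonuniform} construction produces a rectangular flower violating it, contradicting validity on $\RF$. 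Without invoking that construction (or independently proving the max-flow/min-cut duality between $\LP(\pi)$ and the saturation condition C2, which amounts to the same work), your sketch of $\NC\subseteq\RF$ does not close, and this is the direction that carries all the content of the theorem.
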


Given Theorem \ref{thm:recflower}, it is natural to ask whether $\NC=\Con$.
If the answer was yes, we would have a compact description for $\Con$ and
deciding whether a point is in $\Con$ can be done using linear programming (see Section~\ref{sec:proof1} for details).
However, Theorem \ref{thm:nonconvex} shows the answer is no by proving that $\Con$ is non-convex for $n\geq 4$ (while $\NC$ is convex).
We note that $\Con = \UC$ is convex for $n\leq 3$. Proof for this claim appears in Appendix~\ref{app:threed}.

\begin{theorem} (Non-convexity of $\Con$)
\label{thm:nonconvex}
For $n\geq 4$, $\Con$ is not convex.
\end{theorem}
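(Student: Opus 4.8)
The plan is to exhibit a single non-constructible point that lies in the convex hull of $\Con$, and to do this first for $n=4$ and then propagate upward. For the reduction, given any witness configuration in $\R^4$, I would build objects in $\R^n$ by taking Cartesian products with the unit box $[0,1]^{n-4}$: for $T\subseteq\R^4$ one has $|(T\times[0,1]^{n-4})_S|=|T_{S\cap[4]}|$ for every $S\subseteq[n]$, so the induced map on log-projection vectors is linear and recovers $\pi(T)$ on the coordinates indexed by subsets of $[4]$. This lifts constructible points to constructible points and preserves convex-combination relations; moreover a realization of the padded vector in $\R^n$ would restrict, under projection onto the first four axes, to a realization in $\R^4$, so non-constructibility is inherited. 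Hence non-convexity in dimension $4$ forces it for all $n\geq 4$.

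Next I would use Theorem~\ref{thm:nonuniform} to pin down the hull. Since $\Con\subseteq\UC$ and $\UC$ is a closed convex cone, the closed convex hull $\overline{\Conv(\Con)}$ is the intersection of all halfspaces $\sum_S\alpha_S\pi_S\le 0$ that are valid on $\Con$; by Theorem~\ref{thm:nonuniform} each such halfspace is a nonnegative combination of NC inequalities. Consequently, to certify that a point lies in $\overline{\Conv(\Con)}$ it suffices to verify that it satisfies every \emph{valid} NC inequality. This is the essential role of Theorem~\ref{thm:nonuniform}: it replaces the a priori unstructured family of all valid linear inequalities by the combinatorially explicit nonuniform-cover family.

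I would then take the non-constructible point $\pi^{\ast}\in\UCb_4$ of Bollob\'{a}s and Thomason, so that $\pi^{\ast}\notin\Con$ while $\pi^{\ast}\in\UC$. Because $\NC=\RF\subseteq\Con$ by Theorem~\ref{thm:recflower} and $\pi^{\ast}\notin\Con$, the point $\pi^{\ast}$ must violate at least one NC inequality. The heart of the argument is to show that every NC inequality violated by $\pi^{\ast}$ is in fact \emph{not} valid on $\Con$, so that $\pi^{\ast}$ still satisfies all the valid ones and therefore lies in $\overline{\Conv(\Con)}$. Concretely I would certify invalidity by producing, for each violated cover, an explicit constructible object (a rearrangement or union of boxes, or a suitably perturbed rectangular flower) that also violates it; dually, this amounts to writing $\pi^{\ast}=\lambda\,\pi(T_1)+(1-\lambda)\,\pi(T_2)$ as an honest convex combination of log-projection vectors of two constructible objects. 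Either certificate places $\pi^{\ast}\in\Conv(\Con)\setminus\Con$ and yields non-convexity.

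The main obstacle is exactly this certification: among the many NC inequalities active at $\pi^{\ast}$ one must determine which are valid on $\Con$ and supply constructible witnesses for the invalid ones, which is where all the combinatorial content sits. A secondary technical point is the gap between $\Con$ and its closure—showing $\pi^{\ast}\in\overline{\Conv(\Con)}$ by itself only rules out $\Con$ being simultaneously closed and convex. To conclude non-convexity outright I would either establish that $\Con$ is closed on the region of finite entries (so that $\overline{\Conv(\Con)}=\Con$ under the convexity hypothesis, immediately contradicting $\pi^{\ast}\notin\Con$), or, preferably, realize $\pi^{\ast}$ as an \emph{exact} finite convex combination of constructible points, which sidesteps any closure argument altogether.
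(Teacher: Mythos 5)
Your overall strategy---find a point of $\UCb_4$ that is non-constructible yet provably in the convex hull of $\Con$---is the right one, and your padding argument $T\mapsto T\times[0,1]^{n-4}$ for lifting from $n=4$ to $n\geq 4$ is correct. But the proposal has a genuine gap at exactly the step you yourself flag as ``the heart'': you never certify that the Bollob\'as--Thomason point $\pi^{\ast}$ satisfies all inequalities valid on $\Con$, nor do you exhibit it as a convex combination of constructible points. Worse, your plan for closing this gap---determine, for each NC inequality violated by the full $15$-dimensional vector $\pi^{\ast}$, whether it is valid on $\Con$, and refute the violated ones by explicit constructions---runs directly into territory the paper explicitly leaves open: the authors state they were able neither to prove any NC$\setminus$BT inequality nor to refute all of them (this is precisely the content of Conjecture~\ref{conj:1}). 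So as written, your argument reduces the theorem to an open classification problem rather than proving it.

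The paper's proof avoids this via a projection trick you are missing: instead of working with the full vector, it projects $\Con$ onto the seven coordinates indexed by $\{1,2\},\{1,3\},\{2,3\},\{2,4\},\{3,4\},\{1,2,3\},\{2,3,4\}$. On this coordinate set any valid FNC inequality can have at most the two sets $\{1,2,3\},\{2,3,4\}$ on its right-hand side, so the corollary of the Exact Single Cover Theorem (Theorem~\ref{thm:singlecover}, the case $m\leq 2$) classifies \emph{all} valid inequalities there as nonnegative combinations of the two BT inequalities \eqref{bt:1} and \eqref{bt:2}---no open conjectures needed. The point $\phi_0^{(t)}=(0,2\ln t,0,2\ln t,0,\ln t,\ln t)$ satisfies both with \emph{equality}, and the equality-case rigidity (Theorem~4 of \cite{bollobas1995projections}) forces $T_{\{2,3\}}$ to be both the rectangle $B(\frac1t,t)$ and the rectangle $B(t,\frac1t)$, a contradiction; hence $\phi_0^{(t)}$ is not in the projected region, while convexity of $\Con$ would place it there. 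One genuinely perceptive point in your writeup: the closure caveat you raise (membership in $\overline{\Conv(\Con)}$ versus $\Conv(\Con)$) is real, and since $\phi_0^{(t)}$ lies on the boundary face where \eqref{bt:1} and \eqref{bt:2} are tight, the paper's own step from ``$\Pi_0$ is defined by \eqref{bt:1} and \eqref{bt:2}'' to ``$\phi_0^{(t)}\in\Pi_0$'' implicitly assumes closedness as well; your suggested remedy of producing an exact finite convex combination would tighten either argument, but in your proposal it remains a suggestion, not a proof.
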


%Let $\Sigma$ and $\Gamma$ be two families of subsets of $[n]$, a linear constraint of projection volume is an inequality of the following form:
%\[\prod_{A\in \Sigma} |T_{A}|^{\alpha_A} \ge \prod_{B\in \Gamma} |T_{B}|^{\beta_B}\]
%where $\alpha_A$ and $\beta_B$ are positive integers.

Theorem~\ref{thm:nonconvex} implies that there exist certain constructible vectors in $\R^{2^n-1}$ that violate
some NC inequalities. In other words, $\NC\varsubsetneq \Con$.
Thus, it would be interesting to know which NC inequalities are true and which are false
(we already know BT inequalities are true).
In Section~\ref{sec:counterex},
we provide several methods for constructing counterexamples for different subclasses of NC inequalities.
However, we are not able to disprove all NC inequalities that are not BT inequalities,
nor prove the correctness of any of them.
This leads us to the following conjecture that claims only BT inequalities are correct.

\begin{conjecture}
\label{conj:1}
If all points in $\Con$ satisfy a certain linear inequality $\sum_j \beta_j\pi_{B_j}\leq \sum_i\alpha_i\pi_{A_i}$,
the linear inequality must be a BT inequality or a positive combination of BT inequalities.
Equivalently, $\UC=\Conv(\Con)$, the convex hull of $\Con$.
\end{conjecture}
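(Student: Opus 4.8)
The plan is to prove the two assertions together, since they are dual to one another. By the BT uniform-cover theorem we have $\Con\subseteq\UC$, and since $\UC$ is convex this already gives the easy inclusion $\Conv(\Con)\subseteq\UC$. Conversely, the linear inequalities valid on $\Con$ are exactly those valid on $\Conv(\Con)$, so the first assertion—that every valid inequality is a positive combination of BT inequalities—says precisely that the supporting halfspaces of $\Conv(\Con)$ are the halfspaces defining $\UC$; by LP duality for the polyhedral cone $\UC$ this is equivalent to $\Conv(\Con)=\UC$. Hence the whole conjecture reduces to the single hard inclusion $\UC\subseteq\Conv(\Con)$, i.e. to showing that every point of $\UC$ is a (limit of) convex combination of log-projection vectors of genuine objects.

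My main tool would be Theorem~\ref{thm:nonuniform}, which already confines every valid inequality to the cone generated by NC inequalities. The gap between this and the conjecture is exactly the family of NC inequalities that hold on the proper subregion $\NC=\RF$ but are not positive combinations of BT inequalities. Equivalently, I would aim for the contrapositive refutation statement: for every pair with $\calA$ covering $\calB$ whose induced inequality $\sum_{A_i\in\calA}\pi_{A_i}\ge\sum_{B_j\in\calB}\pi_{B_j}$ is \emph{not} a positive combination of uniform-cover inequalities, there is a constructible vector violating it. The counterexample constructions of Section~\ref{sec:counterex} already carry this out for several subfamilies; the conjecture is the claim that such a refutation can always be produced.

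Concretely, I would develop a decomposition calculus on covering pairs $(\calA,\calB)$: whenever the map $f$ lets one split off a uniform sub-cover, peel it away as a BT factor and recurse, reducing to a finite list of \emph{irreducible} pairs. For each irreducible pair I would then seek a suitable object $T$—necessarily outside the family $\RF$, since rectangular flowers satisfy every NC inequality by Theorem~\ref{thm:recflower}—with $\prod_{A_i\in\calA}|T_{A_i}|<\prod_{B_j\in\calB}|T_{B_j}|$. Some care is needed because Theorem~\ref{thm:nonuniform} only places valid inequalities in the \emph{cone} generated by NC inequalities: a valid inequality could in principle be a positive combination of individually invalid NC inequalities, so refuting the NC generators one by one is not literally enough, and one must instead argue at the level of the extreme rays of the valid cone, showing that each such ray is a BT inequality.

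I expect the main obstacle to be precisely the completeness of this refutation: the constructions known so far (Section~\ref{sec:counterex}) are tailored to special structure, and it is genuinely unclear whether every irreducible covering pair can be refuted by some object, or whether a few irreducible NC inequalities are in fact valid. In the latter case the conjecture would be false and $\Conv(\Con)$ would be a strictly smaller cone than $\UC$, carved out by additional non-BT facets. Settling this dichotomy—together with controlling the extreme rays of the valid cone rather than merely its generating NC inequalities—is exactly the crux that the present paper leaves open.
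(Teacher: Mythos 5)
This statement is a \emph{conjecture} in the paper: the authors offer no proof, only supporting evidence (the counterexample constructions of Section~4, the Exact Single Cover Theorem, and an exhaustive check in $\R^3$ and $\R^4$). Your proposal is therefore not comparable to a paper proof, and on its own terms it does not constitute one. The first half of your argument is correct but routine: $\Con\subseteq\UC$ with $\UC$ convex gives $\Conv(\Con)\subseteq\UC$, and the duality between ``every valid inequality is a positive combination of BT inequalities'' and ``$\UC\subseteq\Conv(\Con)$'' is sound, with one caveat you gloss over --- validity of inequalities only determines the \emph{closed} convex hull, so your equivalence is with $\overline{\Conv(\Con)}=\UC$ rather than with $\Conv(\Psi_n)=\UC$ as stated; the two could in principle differ if $\Conv(\Con)$ is not closed, and your proposal does not address this.

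The genuine gap is that everything after the reduction is a research program, not an argument. The ``decomposition calculus'' peeling off uniform sub-covers is never defined (what exactly is an irreducible pair? why is the recursion well-founded? why does peeling preserve validity status?), and the key step --- exhibiting, for every irreducible non-BT pair, an object violating the inequality --- is exactly the open problem, as you yourself concede. You correctly identify the subtlety that Theorem~2 only places valid inequalities in the cone generated by NC inequalities, so refuting NC generators one by one is insufficient, and that one must control the extreme rays of the cone of valid inequalities; but you give no reason why that cone is polyhedral or why its extreme rays should be NC inequalities at all (it is sandwiched between the BT cone and the FNC cone, and nothing in the paper pins it down further). In short, your proposal accurately reframes Conjecture~1 and locates its difficulty, which matches the paper's own discussion, but it proves neither inclusion beyond the easy one and leaves the conjecture as open as the paper does.
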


%$\text{EmbIne}_n=\{\pi\in \mathbb{R}^{2^n-1} \mid \pi \text{ satisfies all NC inequalities}\}$.

%$\text{RecFlo}_n=\{\pi\in \mathbb{R}^{2^n-1} \mid \exists \text{a rectangular flower whose projection vector is} \pi\}$.

%$\Pi=\{\pi\in \mathbb{R}^{2^n-1}\mid \exists \text{ an object } T; \text{ } \pi \text{ is the projection vector of T} \}$.

%$\text{CorIne}_n=\{\pi\in \mathbb{R}^{2^n-1} \mid \pi \text{ satisfies all correct inequalities}\}$.

%$\text{UniCovIne}_n=\{\pi\in \mathbb{R}^{2^n-1} \mid \pi \text{ satisfies all BT inequalities}\}$.

At the end of this part, we summarize our results in the following chain for $n\geq 4$:
\[
\RF = \NC \subsetneq \Con \subsetneq \Conv(\Con) \subseteq \UC,
\]
and we conjecture that $\Conv(\Con)= \UC$.

\subsection{A Motivating Problem from Databases}

Our problem is closely related to the data generation problem \cite{arasu2011data} studied in the area of databases,
which is also our initial motivation for studying the problem.
Generating synthetic relations under various constraints is a key problem for testing data management systems.
A relation $R(A_1,\ldots, A_n)$ is a table, where each row is one record about some entity, and each column $A_i$ is an attribute.
One of the most important operations in relational databases is the projection operation to a subset of attributes.
One can think of the projection to subset $S$ of attributes, denoted as $\Pi_S(R)$, as
the table $R$ first restricted to columns in $S$, and then with duplication removed.
To see the connection between the database problem and geometry,
think of a relation $R(A_1,\ldots, A_n)$ with $n$ attributes as a $n$-dimensional object $T$
in $\R^n$:
A tuple (i.e., a row) $(t_1,t_2,\ldots, t_m)$ can be thought as a unit cube
$[t_1-1,t_1]\times\ldots\times[t_m-1,t_m]$.
$T_S$, the projection of $T$ onto $\span(S)$, exactly corresponds to the projected relation $\Pi_S(R)$.

\begin{example}
The following table shows the information of course registration. $5$ items in the table correspond to unit squares in the coordinate system. In this way, a table is represented by an object in Euclidean space.

\centering
\includegraphics[scale=0.43]{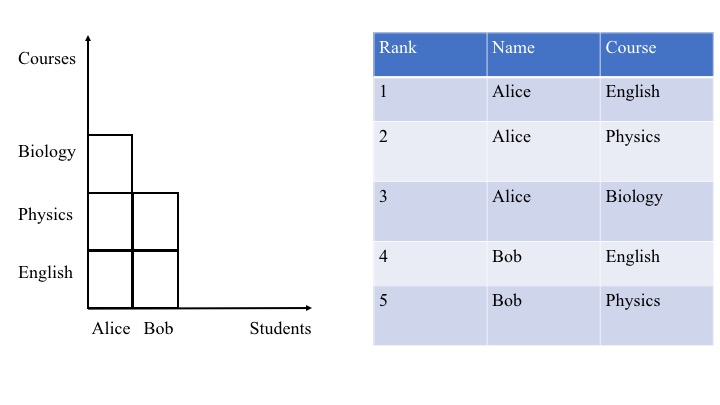}

\iffalse
\begin{minipage}{0.6\textwidth}
	\includegraphics[scale=0.35]{diagram}
\end{minipage}
\begin{minipage}{0.6\textwidth}
	\begin{tabular}{c|c|c}
		\hline
		\emph{Rank} & \emph{Name} & \emph{Course}\\
		\hline
		\emph{1} & \emph{Alice}  & \emph{English} \\
		\emph{2} & \emph{Alice}  &  \emph{Physics}\\
		\emph{3} & \emph{Alice}  & \emph{Biology}\\
		\emph{4} & \emph{Bob}    & \emph{English}\\
		\emph{5} & \emph{Bob}    & \emph{Physics}\\
		\hline
	\end{tabular}
\end{minipage}\fi
\end{example}

In the data generating problem with projection constraints,
we are given the cardinalities $|\Pi_S(R)|$ for a set of subsets $S\subseteq [n]$.
The goal is to construct a relation $R$ that is consistent with the given cardinalities, which is a discrete version of our geometry problem.
Moreover, if the given cardinalities (after taking logarithm) is not in $\Con$, or violate any projection inequality,
there is no solution to the data generation problem.
Therefore, a good understanding of the geometric structure of $\Con$ is central for solving the data generation problem.

\subsection{Other Related Work}

Loomis and Whitney proved a class of projection inequalities in \cite{loomis1949inequality}, 
allowing one to upper bound the volume of a $d-$dimensional object by its $(d-1)$-dimensional projection volumes.
Their inequalities are a subclass of BT inequalities.
BT inequalities and their generalizations also play an essential role in 
the worst-case optimal join problem in databases
(we can get an upper bound of the size of the relation $R$ knowing the cardinalities of its projections).
See \cite{ngo2012worst} for some recent results on this problem.

There is a large body of literature on the constructible region $\Gamma_n$ for the joint entropy function over $n$ random variables $X_1,\ldots, X_n$.
More specifically, for each joint distribution over $X_1,\ldots, X_n$, there is a point in $\Gamma_n$, which is a $2^n-1$ dimensional vector, 
with the entry indexed by $S\subseteq[n]$ being $H(\{X_i\}_{i\in S})$.  
Characterizing $\Gamma_n$ for $n\ge 3$ is a major problem in information theory and has been studied extensively.
Many entropy inequalities are known, including Shannon-type inequalities and non-Shannon-type inequalities \cite{zhang1997non,makarychev2002new,zhang2002new,matus2007infinitely}.
For a comprehensive treatment of this topic, we refer interested readers to the book~\cite{yeung2008information}. 
There are concrete connections between entropy inequalities and projection inequalities \cite{chung1986some,bollobas1995projections, friedgut2004hypergraphs,balister2012projections}.
In particular, BT inequalities can be easily derived from the well-known Shearer's entropy inequalities \cite{chung1986some}   
(many even regard them as the same).

\section{Proof of Theorem~\ref{thm:nonuniform} and Theorem~\ref{thm:recflower}}
\label{sec:proof1}

In this section, we prove Theorem \ref{thm:nonuniform} and Theorem \ref{thm:recflower}.
We introduce a class of geometric objects that are crucial to our proofs.
We say a $n$-dimensional object $T\subseteq \R^n_+=\{x\mid x_i\ge 0, \forall i\in [n]\}$ is {\em cornered} if
$x\in T$ implies $y\in T$ for all $\vec{0}\le y\leq x$ (i.e., $0\le y_i\leq x_i$ for all $i\in [n]$).
An object $T\subseteq \mathbb{R}^n_+$ is said to be an {\em open rectangle} if $T= (0,a_1]\times (0,a_2]\times\ldots\times (0,a_n]$,
or a {\em closed rectangle} if $T= [0,a_1]\times [0,a_2]\times\ldots\times [0,a_n]$.
%For any $S=\{i_1,i_2,\ldots, i_k\}=\subseteq [n]$, we use
%$R_S(\{a_{i_1},\ldots, a_{i_k}\}$ to denote the rectangle
%$(0,a_1]\times (0,a_2]\times\ldots\times (0,a_k]$ in $\span(S)$
%(i.e., the edge length along axis $i_j$ is $a_{i_j}$ for $1\leq j\leq k$).

\begin{definition}
We say $T\subseteq \mathbb{R}^n_+$ is a {\em rectangular flower} if (i) $T$ is {\em cornered}, and (ii) $T\cap (0,\infty)^S$ is an open rectangle in $(0,\infty)^S$ for any $S\subseteq [n]$.
\end{definition}

\begin{figure}[h]
	\centering
	\includegraphics[width=0.7\linewidth]{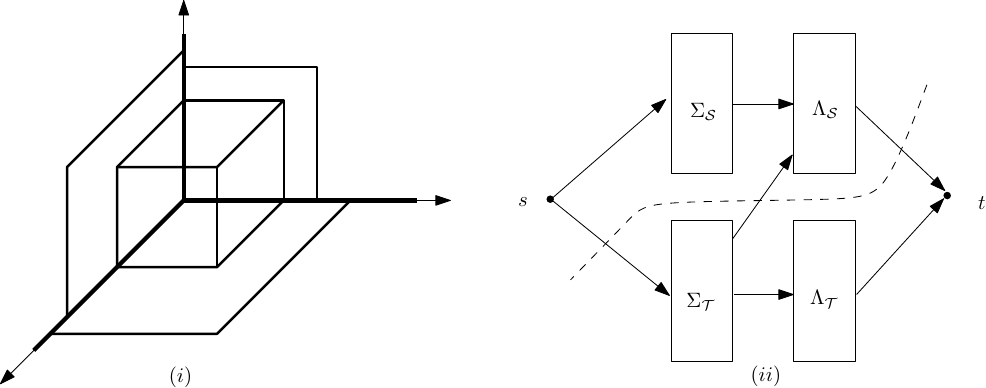}
	\caption{(i) A 3-dimensional rectangular flower.
		(ii) The network flow $\Network(\calA,\calB)$. The dashed line represents the minimum $s$-$t$ cut.}
	\label{fig:flower}
\end{figure}

See Figure~\ref{fig:flower} for an example.
It is easy to see that a rectangular flower $T\subseteq \R^n_+$ is the union of $2^n-1$ closed rectangles
$\bigcup_{S\subseteq [n], S\ne \emptyset} T_S$, each $T_S$ being a closed rectangle in $\span(S)$.
Moreover, if $S\subseteq S'$, for any $i\in S$, the edge length of $T_S$ along axis $i$ is at least that of $T_{S'}$ (since $T$ is cornered).

We then introduce a new class of inequalities, called fractional nonuniform-cover inequalities, which can be seen as
the fractional generalization of NC inequalities.
Let $\calA=\{(A_i,\alpha_i)\}_{i=1}^{k}$, $\calB=\{(B_j,\beta_j)\}_{j=1}^{m}$ be two families of weighted subsets of $[n]$,
where $A_i$ and $B_j$ are subsets of $[n]$ and $\alpha_i>0$ ($\beta_j>0$ resp.) is the positive weight associated with $A_i$ ($B_j$ resp.).
We construct a network flow instance $\Network(\calA,\calB)$ as follows:
let $\Sigma=\{(A_i,x)\mid x\in A_i, A_i\in \calA\}$ and $\Lambda=\{(B_j,y)\mid y\in B_j, B_j\in \calB\}$
be two sets of nodes.
Let node $s$ be the source and node $t$ be the sink.
There is an arc from $s$ to each node $(A_i,x)\in \Sigma$ with capacity $\alpha_i$.
There is an arc from each node $(B_j,y)\in \Lambda$ to $t$ with capacity $\beta_j$.
For each pair of $(A_i,x)$ and $(B_j,y)$, there is an arc with capacity $+\infty$ from $(A_i,x)$ to $(B_j,y)$
if $A_i\subseteq B_j$ and $x=y$.
We say $\calA$ {\em saturates} $\calB$ if the following properties hold:
\begin{enumerate}
	\item[C1.]
	For any $x\in [n]$, $\sum_{i=1}^k \alpha_i \cdot \mathbf{1}[x\in A_i]=\sum_{j=1}^m \beta_j \cdot\indicator[x\in B_j]$.
	%The disjoint union of $\{A_i\}_{i=1}^{k}$ is the same as the disjoint union of $\{B_j\}_{j=1}^{m}$.
	%In other words, for every element in $[n]$, it appears in $\calA$ exactly the same number of times as in $\calB$.
	
	\item[C2.]
	The maximum $s$-$t$ flow (or equivalently, the minimum $s$-$t$ cut) of $\Network(\calA,\calB)$ is $\sum_j \beta_j |B_j|$.
	%There is an one-to-one correspondence $f$ between $\Sigma$ and $\Gamma$ such that for each
	%$$(A_i,t)\in \Sigma$ with $(B_j,s)=f((A_i,t))$, it holds that $t=s$ and $A_i\subset B_j$.
\end{enumerate}

\begin{definition}(Fractional-Nonuniform-Cover (FNC) inequalities)
	Suppose $x$ is a $2^n-1$ dimensional vector indexed by nonempty subsets of $[n]$ and $\calA$ saturates $\calB$.
	A fractional-nonuniform-cover inequality is of the following form:
	\[\prod_{(A_i,\alpha_i)\in \calA}x_{A_i}^{\alpha_i}\ge \prod_{(B_j,\beta_j)\in\calB}x_{B_j}^{\beta_j}.\]
\end{definition}

When the context is clear, we also refer to linear inequalities of the form
$\sum_{(A_i,\alpha_i)\in \calA} \alpha_i \pi_{A_i} \ge \sum_{(B_j,\beta_j)\in\calB} \beta_j\pi_{B_j}$
as FNC inequalities.
We say that $\sum_{(A_i,\alpha_i)\in \calA} \alpha_i \pi_{A_i} \ge \sum_{(B_j,\beta_j)\in\calB} \beta_j\pi_{B_j}$ is the linear form of the FNC inequality $\prod_{(A_i,\alpha_i)\in \calA}x_{A_i}^{\alpha_i}\ge \prod_{(B_j,\beta_j)\in\calB}x_{B_j}^{\beta_j}$.
And an object $T\subseteq \mathbb{R}^n$ satisfies the FNC inequality $\prod_{(A_i,\alpha_i)\in \calA}x_{A_i}^{\alpha_i}\ge \prod_{(B_j,\beta_j)\in\calB}x_{B_j}^{\beta_j}$ if $\prod_{(A_i,\alpha_i)\in \calA}|T_{A_i}|^{\alpha_i}\ge \prod_{(B_j,\beta_j)\in\calB}|T_{B_j}|^{\beta_j}$, or equivalently $\sum_{(A_i,\alpha_i)\in \calA} \alpha_i\pi(T)_{A_i} \ge \sum_{(B_j,\beta_j)\in\calB} \beta_j\pi(T)_{B_j}$ (i.e., its log-projection vector satisfies the linear form of the FNC inequality). 

In fact, the notion of saturation ($\calA$ saturates $\calB$) is a generalization of the notion of covering ($\calA$ covers $\calB$). Put in another way, a NC inequality is automatically an FNC inequality. Intuitively, this is because an one-to-one-mapping between $\Sigma$ and $\Lambda$ can be also viewed as a maximum flow (a flow that saturates every edge going out of $s$) in $\Network(\calA,\calB)$. To be more specific, we state and prove Lemma~\ref{lm:FNC-NC}. Before that, we need some definitions.

Let $K$ be an arbitrary positive integer, and for each $k\in [K]$, let $\phi_k: \sum_{A_i\in \calA^{(k)}}\pi_{A_i}\ge \sum_{B_j\in\calB^{(k)}} \pi_{B_j}$ be a NC inequality (the linear form) and $\gamma_k\ge 0$ be an arbitrary real number. The nonnegative linear combination $\sum_{k=1}^{K}\gamma_{k}\cdot \phi_{k}$, is defined as the following inequality:
\begin{equation}
\label{eq:comb}
\sum_{1\le k\le K}\gamma_k\left(\sum_{A_i\in \calA^{(k)}}\pi_{A_i}\right)\ge \sum_{1\le k\le K}\gamma_{k}\left(\sum_{B_j\in\calB^{(k)}} \pi_{B_j}\right).
\end{equation}

\begin{lemma}
	\label{lm:FNC-NC}
	The set of FNC inequalities (the linear form) is exactly the set of
	all nonnegative linear combinations of NC inequalities (the linear form).
\end{lemma}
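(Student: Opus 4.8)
The plan is to recast both the ``covers'' relation defining NC inequalities and the ``saturates'' relation defining FNC inequalities as statements about nonnegative flows on the infinite-capacity middle arcs of $\Network(\calA,\calB)$, and then to exploit the fact that a rational polyhedral cone is the nonnegative hull of its integral points.

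First I would fix a dictionary. Call a nonnegative function $g(A,B,x)$, defined for all triples with $x\in A\subseteq B$, \emph{balanced} if its row sum $\sum_{B:A\subseteq B} g(A,B,x)$ does not depend on the choice of $x\in A$ (call it $a_A$) and its column sum $\sum_{A:\,x\in A\subseteq B} g(A,B,x)$ does not depend on $x\in B$ (call it $b_B$). Each balanced $g$ determines the linear form $\sum_A a_A\pi_A\ge\sum_B b_B\pi_B$; write $\Phi(g)$ for its coefficient vector, so $\Phi$ is linear in $g$. I would then verify two correspondences. (i) The nonnegative \emph{integral} balanced flows are exactly the witnesses that the multiset with $a_A$ copies of each $A$ covers the multiset with $b_B$ copies of each $B$: integrality lets one split $g$ into unit middle arcs, yielding the one-to-one matching demanded by condition~2 of ``covers,'' while summing the balance relations over a fixed element $x$ gives $\sum_{A\ni x}a_A=\sum_{B\ni x}b_B$, which is condition~1. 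Hence the NC inequalities are exactly the images under $\Phi$ of nonnegative integral balanced flows. (ii) A nonnegative \emph{real} balanced flow with prescribed sums $a_A=\alpha_A$ and $b_B=\beta_B$ is precisely a flow of $\Network(\calA,\calB)$ that saturates every arc incident to $s$ and to $t$; since condition~C2 is exactly the requirement that such a saturating flow exists (and C1 matches the total capacities on the two sides), the FNC inequalities are exactly the images under $\Phi$ of nonnegative real balanced flows.

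With this dictionary the easy inclusion is immediate: a nonnegative combination $\sum_\ell\lambda_\ell\,\Phi(g_\ell)$ of NC inequalities equals $\Phi\bigl(\sum_\ell\lambda_\ell g_\ell\bigr)$, and $\sum_\ell\lambda_\ell g_\ell$ is again a nonnegative balanced flow, hence an FNC inequality by (ii); concretely, one superimposes the individual matchings scaled by the $\lambda_\ell$ to obtain the saturating flow required by C2. For the reverse inclusion I would observe that the set $C$ of nonnegative balanced flows is a \emph{rational} polyhedral cone, being cut out by nonnegativity together with the balance equalities, all of which have integer coefficients. By Minkowski--Weyl, $C$ is the nonnegative hull of finitely many integral vectors lying in it, and by (i) each such generator maps under $\Phi$ to an NC inequality. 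Given any FNC inequality, (ii) writes it as $\Phi(f)$ with $f\in C$; decomposing $f=\sum_i c_i r_i$ with $c_i\ge 0$ and $r_i$ integral generators and applying linearity of $\Phi$ expresses it as $\sum_i c_i\Phi(r_i)$, a nonnegative combination of NC inequalities. Real weights $\alpha_A,\beta_B$ cause no difficulty, since a point of a rational cone may still be a nonnegative \emph{real} combination of integral generators.

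The main obstacle is the reverse inclusion, and within it the delicate point is correspondence (i): one must check that every nonnegative integral balanced flow really does encode a legitimate cover---both the disjoint-union identity and an honest one-to-one matching between slots---and that the saturation condition C2 is correctly identified with the existence of a fully saturating flow. Once this combinatorial bookkeeping is in place, the decomposition itself is only the standard generation of a rational cone by its integer points, so the set-up of the flow dictionary is where essentially all the care is required.
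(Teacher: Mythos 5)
Your proposal is correct, but it takes a genuinely different route from the paper's. The paper splits on rationality of the coefficients: for rational weights it scales to integral capacities and uses integrality of maximum flow to conclude that the FNC inequality is itself (a multiple of) an NC inequality; for irrational weights it observes that the cone of FNC coefficient vectors is a rational polyhedral cone (mentioning auxiliary flow variables only parenthetically), approximates the given coefficient vector by a sequence of rational FNC vectors to deduce that the inequality is valid on all of $\NC$, and then invokes Farkas' lemma to write it as a nonnegative combination of NC inequalities. You instead work directly in the lifted flow space: your cone $C$ of balanced middle-arc flows is precisely the auxiliary-variable description the paper alludes to, and decomposing a point of $C$ over integral generators (Minkowski--Weyl for rational cones) and pushing forward along the linear map $\Phi$ yields the nonnegative combination in one step. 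This buys you a uniform treatment of irrational weights with no limiting argument, and it sidesteps the paper's somewhat delicate appeal to Farkas for a system of infinitely many NC inequalities (which strictly speaking requires knowing $\NC$ is polyhedral). Your correspondence (i) --- splitting an integral balanced flow, element by element, into the per-slot bijection demanded by the definition of covers --- is the same integrality phenomenon the paper exploits in its rational case, packaged at the level of cone generators rather than a single maximum flow; your checks (balance sums giving condition~1, per-element distribution of $g(A,B,x)$ giving the matching of condition~2, and saturation at both source and sink arcs being equivalent to C1 together with C2) are all sound. One small reading note: C2 as printed says the maximum flow equals $\sum_j \beta_j$, whereas your interpretation --- all arcs into $t$ (hence, by C1, all arcs out of $s$) are saturated, i.e.\ value $\sum_{(B_j,y)\in\Lambda}\beta_j$ --- is the intended one, consistent with how the paper computes cut values in its proof of Theorem~\ref{thm:nonuniform}.
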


\begin{proof}
First we show that a nonnegative linear combination of NC inequalities (the linear form) is a FNC inequality (the linear form). It suffices to show that for arbitrary $K, \gamma_1,\cdots,\gamma_k$ and NC inequalities $\phi_1,\cdots,\phi_k$, the inequality~\ref{eq:comb} satisfies properties C1 and C2, which implies that the nonnegative linear combination is an FNC inequality. 

Property C1 follows from the property P1 in the definition of $\calA$ covers $\calB$. We now show that the inequality~\ref{eq:comb} satisfies property C2.
For each $\phi_{k}$, let $f_k$ be the one-to-one mapping from $\Sigma_k=\{(A_i,t)\mid t\in A_i, A_i\in \calA_k\}$ to $\Lambda_k=\{(B_j,s)\mid s\in B_j, B_j\in \calB_k\}$, whose existence is guaranteed by the property P2 of in the definition of an NC inequality. 
Let family $\calA^*$ be the disjoint union of families $\calA^{(1)},\cdots,\calA^{(K)}$ and family $\calB^*$ be the disjoint union of families $\calB^{(1)},\cdots,\calB^{(K)}$, and let $\Network(\calA^*,\calB^*)$ be the network flow instance in the definition of $\calA$ saturates $\calB$.
We now describe an $s$-$t$ flow in the network $\Network(\calA^*,\calB^*)$ whose value is $\sum_k \gamma_k\left(\sum_j|B_j|\right)$.
For each node labeled by $(A_i, x)$ where $A_i\in \calA^{(k)}$, we send $\gamma_k$ unit of flow along the edge $(s, (A_i, x))$, and we send $\gamma_{k}$ unit of flow along the edge $((A_i, x), (B_j, y))$ where $B_j\in\calB^{(k)}$ and $(B_j, y)=f_{k}((A_i,x))$.
For each node labeled by $(B_j, y)$ where $B_j\in \calB^{(k)}$, we send $\gamma_k$ unit of flow along the edge $((B_j, y),t)$.
It is not hard to verify that the above flow satisfies all capacity constraints and flow conservation constraints, so it is a valid $s$-$t$ flow.
Its value is the total amount of flow that the sink $t$ receives, which is exactly $\sum_k \gamma_k\left(\sum_j|B_j|\right)$ since every node $(B_j, y)$ where $B_j\in \calB^{(k)}$ sends to it $\gamma_k$ unit of flow. Therefore, property C2 is satisfied.

Now we prove that every FNC inequality is a nonnegative linear combination of NC inequalities. Let $\sum_{(A_i,\alpha_i)\in \calA} \alpha_i \pi_{A_i} \ge \sum_{(B_j,\beta_j)\in\calB} \beta_j\pi_{B_j}$ be an FNC inequality (the linear form). For brevity and only in this proof, we rewrite it into the form: $\langle c,\pi \rangle\leq 0$ where $c$ is a real vector. We consider the following two cases.

\textbf{Case 1:} If $c$ is a rational vector (all entries in $c$ are rational numbers), there exists $m\in \mathbb{N}^{+}$ such that $mc$ is integral. Consider the network flow instance associated with the equivalent FNC inequality $\langle mc,\pi \rangle\leq 0$. Since all capacities of the network are integral, there exists an integral maximum flow. By flow-path decomposition, an integral maximum flow can be decomposed into a number of single-path flows, each carrying $1$ unit of flow from $s$ to $t$, passing through two other nodes in between. Viewing this collection of unit-amount-single-path flows as an one-to-one mapping between nodes in $\Sigma$ and nodes in $\Lambda$ (each with integral number of copies), we can see that $\langle mc,\pi \rangle\leq 0$ is indeed an NC inequality, and therefore the equivalent inequality $\langle c,\pi \rangle\leq 0$ is a scaling of a NC inequality.

\textbf{Case 2:} If not all entries of $c$ are rational, we first show that every vector $\pi$ in $\NC$ satisfies $\langle c,\pi \rangle\leq 0$. Suppose not, then there exists a vector $y\in \NC$ such that $\langle c,y\rangle=\epsilon >0$. We claim that there is a sequence of FNC inequalities with rational coefficient vectors $\{\langle c^{(i)},\pi\rangle\leq 0\}_i$ such that $\lim_{i\to+\infty}c^{(i)} = c$. Hence, for some sufficiently large $i$, $\langle c^{(i)},y\rangle \geq \epsilon/2>0$. However, since $c^{(i)}$ is a rational vector, according to the conclusion in Case 1, we know that the inequality $\langle c^{(i)},\pi\rangle\leq 0$ is (a scaling of) an NC inequality, which should be satisfied by any vector $y\in \NC$ according to the definition of $\NC$. This is a contradiction. Now we show existence of the claimed sequence. First of all, the set of vectors $c$ such that $\langle c,\pi \rangle\leq 0$ is a FNC inequality forms a rational polyhedral cone, since this set is characterized by the linear constraints in C1 (whose coefficients are integers) and the max-flow constraint in C2 (which can be captured by linear constraints with integer coefficients using auxiliary flow variables). Let $V$ be the set of rational generating vectors of this cone and let $c= V\gamma$ for some $\gamma\geq 0$ (each column of $V$ is a generating vector). Take a rational nonnegative sequence of vectors $\{\gamma^{(j)}\}_j$ that approaches to $\gamma$, $\{V\gamma^{(j)}\}$ would be the desired sequence.
	
The rest can be seen from Farkas' Lemma: Let $P\pi\leq 0$ be a feasible system of inequalities and $\langle c,\pi\rangle\leq 0$ be an inequality satisfied by all $\pi$ satisfying $P\pi\leq 0$. From Farkas' Lemma, we know that $\langle c,x\rangle\leq 0$ is a nonnegative linear combination of the inequalities in $P\pi\leq 0$ (see e.g., \cite{korte2012combinatorial}).
\end{proof}

\noindent
{\em Proof of Theorem~\ref{thm:nonuniform}}.
It suffices to show that all non-FNC inequalities are false, in the sense that any non-FNC inequality is violated by some object. Consider a non-FNC inequality:
\begin{equation}
\label{eq:nonFNC}
\prod_{(A_i,\alpha_i)\in \calA}x_{A_i}^{\alpha_i}\ge \prod_{(B_j,\beta_j)\in\calB}x_{B_j}^{\beta_j},
\end{equation}
where $\calA$ {\em does not} saturate $\calB$. We construct a rectangular flower $T$ that violates \eqref{eq:nonFNC}.

Consider
the network flow instance $\Network(\calA,\calB)$.
Suppose C1 does not hold:
for some $x\in [n]$,
$\sum_{i=1}^k \alpha_i \cdot\indicator[x\in A_i]\ne \sum_{j=1}^m \beta_j\cdot \indicator[x\in B_j]$.
\iffalse
First, if $\sum_{i=1}^k \alpha_i |A_i|\ne \sum_{j=1}^m \beta_j |B_j|$, we can show that \eqref{eq:nonFNC} is false as follows. We consider two cases: If
$\sum_{i=1}^k \alpha_i |A_i|< \sum_{j=1}^m \beta_j |B_j|$, we can see that
\eqref{eq:nonFNC} is false by considering $T=[0,2]^n$
($\log \text{LHS}= \sum_{i=1}^k \alpha_i |A_i|$ and
$\log \text{RHS}= \sum_{j=1}^m \beta_j |B_i|$);
if
$\sum_{i=1}^k \alpha_i |A_i|> \sum_{j=1}^m \beta_j |B_j|$, we can see that
\eqref{eq:nonFNC} is false by considering $T=[0,1/2]^n$.
Now, suppose $\sum_{i=1}^k \alpha_i |A_i|= \sum_{j=1}^m \beta_j |B_j|$
but $\sum_{i=1}^k \alpha_i \cdot \indicator[x\in A_i]\ne \sum_{j=1}^m \beta_j \cdot\indicator[x\in B_j]$
for some $x$. 
\fi
Assume without loss of generality that $x=1$. We consider two cases. 
First, if 
$\sum_{i=1}^k \alpha_i \cdot\indicator[1\in A_i]>\sum_{j=1}^m \beta_j\cdot \indicator[1\in B_j]$.
Let $T=[0,1/2]\times [0,1]\times\ldots\times [0,1]$.
We can see \eqref{eq:nonFNC} is false for $T$ since $\log \text{LHS}= -\sum_{i=1}^k \alpha_i\cdot \indicator[1\in A_i]$ and
$\log \text{RHS}= -\sum_{j=1}^m \beta_j \cdot\indicator[1\in B_j]$.
Second, if 
$\sum_{i=1}^k \alpha_i \cdot\indicator[1\in A_i]<\sum_{j=1}^m \beta_j\cdot \indicator[1\in B_j]$.
Let $T=[0,2]\times [0,1]\times\ldots\times [0,1]$.
We can see \eqref{eq:nonFNC} is false for $T$ since $\log \text{LHS}= \sum_{i=1}^k \alpha_i\cdot \indicator[1\in A_i]$ and
$\log \text{RHS}=\sum_{j=1}^m \beta_j \cdot\indicator[1\in B_j]$.

Now suppose C2 is false, that is, the value of the minimum $s$-$t$
cut of $\Network(\calA,\calB)$ is strictly less than $\sum_j \beta_j|B_j|$.
Suppose the minimum $s$-$t$ cut defines the partition $(\calS,\calT)$ of vertices such that $s\in \calS$
and $t\in \calT$ (see Figure~\ref{fig:flower}(ii)).
Let $\Sigma$ and $\Lambda$ be defined as above, and
$\Sigma_\calS=\Sigma\cap \calS$,
$\Sigma_\calT=\Sigma\cap \calT$,
$\Lambda_\calS=\Lambda\cap \calS$,
$\Lambda_\calT=\Lambda\cap \calT$.
Since the min-cut is strictly less than $\sum_j \beta_j|B_j|$, none of the above four sets are empty.
Clearly, there is no edge from $\Sigma_S$ to $\Lambda_T$ since otherwise
the size of the cut would be $+\infty$.
In other words, $\Lambda_\calS$ absorbs all outgoing edges from $\Sigma_\calS$.

Moreover, we can see the value of the min-cut is $\sum_{(A_i, x)\in \Sigma_\calT}\alpha_i+\sum_{(B_j, y)\in \Lambda_\calS}\beta_j$.
Since this value is less than $\sum_{(B_j, y)\in \Lambda} \beta_j$,
we have $\sum_{(A_i, x)\in \Sigma_\calT}\alpha_i< \sum_{(B_j, y)\in \Lambda_\calT}\beta_j$.
% and $\sum_{(A_i, x)\in \Sigma_\calS}\alpha_i> \sum_{(B_j, y)\in \Lambda_\calS}\beta_j$ due to C1.
Now, we construct the rectangular flower $T$.
Suppose
$T=\bigcup_{S\subseteq [n], S\ne \emptyset} T_S$ with each $T_S$ being a closed rectangle in $\span(S)$, and for each $x\in S$ we use $T_{S,x}$ to denote the edge length of $T_S$ along axis $x$.
We specify all $T_{S,x}$ as follows:
$$
T_{S,x}=\left\{
\begin{array}{rcl}
2       &      & {\text{if there exists a }B_j\text{ such that }S\subseteq B_j, \text{ and } (B_j, x)\in \Lambda_\calT},\\
1       &      & \text{otherwise.}
\end{array} \right. $$

We now verify that the above rectangular flower $T$ violates the given non-FNC inequality.
By definition of $T_{S,x}$ above, we can show that $T_{A_i,x}=1$ for any node
$(A_i,x)\in \Sigma_\calS$. This is because there does not exist a $B_j$ such that $A_i\subseteq B_j$ and $(B_j,x)\in \Lambda_\calT$, otherwise, there will be an edge from $\Sigma_\calS$ to $\Lambda_\calT$ (according to the definition of the edges in $\Network(\calA,\calB)$), leading to a contradiction. Note that $T_{A_i,x}\le 2$ for any node $(A_i,x)\in \Sigma_\calT$, we have 
\[
\log\left(\prod_{(A_i,\alpha_i)\in \calA}|T_{A_i}|^{\alpha_i}\right)
\le \sum_{(A_i, x)\in \Sigma_\calT}\alpha_i.
\]
On the other hand, we have
\[
\log\left(\prod_{(B_j,\beta_j)\in \calB}|T_{B_j}|^{\beta_j}\right)
\ge \sum_{(B_j, y)\in \Lambda_\calT}\beta_j,
\]
which implies the given inequality is false for $T$.
This completes the proof for Theorem~\ref{thm:nonuniform}.
\qed

We denote the set of log-projection vectors generated by rectangular flowers as
$$
\RF =\{ \pi \in \R^{2^n-1} \mid \pi \text{ is the log-projection vector of a rectangular flower } T \subseteq \mathbb{R}^n\}.
$$
Now, we prove Theorem~\ref{thm:recflower}.

%Consider a rectangular flower $T=\bigcup_{S\subseteq [n], S\ne \emptyset} F_S$.
%Let $F_{S,i}$ be the edge length of $F_{S}$ along axis $i$
%and $f_{S,i}=\log F_{S,i}$.
%All $F_{S,i}$ form an $n\cdot 2^{n-1}$ dimensional vector $\pi_{f}(F)$, called the  edge-vector of F.

\vspace{0.3cm}
\noindent
{\em Proof of Theorem~\ref{thm:recflower}}.
By definition, $\RF\subseteq \Con$. If suffices to show that $\RF=\NC$.
Note that a rectangular flower $T$ can be characterized by the edge lengths of its orthogonal projections along all axes, i.e., $\{T_{S,x}\}_{x\in S}$, a given vector $\pi$ is the log-projection vector of some rectangular flower in $\R^n$ iff the following linear program, denoted as $\LP(\pi)$, is feasible (treating $f_{S,x}$ as variables\footnote{It is intended that for all $x\in [n], S\subseteq [n]$ such that $i\in S$, $f_{S,x}=\log T_{S,x}$.}):
\begin{eqnarray*}
	\sum_{x\in S}f_{S,x} &=& \pi_S, \quad \text{ for all }S\subseteq [n], \\
	f_{S,x} &\geq & f_{S',x}, \quad \text{ for all }x\in S\subseteq S' \subseteq [n].
\end{eqnarray*}
With this notation, we have $\RF =\{ \pi \in \R^{2^n-1} \mid \LP(\pi) \text{ is feasible}\}$.
It is not hard to check that $\RF$ is a convex cone
(i.e., if $\pi_1, \pi_2 \in \RF$, $a\pi_1+b\pi_2\in \RF$ for any $a,b>0$).
In fact, $\RF$ is a closed polyhedral cone, as it is the intersection of finitely many closed halfspaces.

We first show that each point in $\RF$ satisfies all NC inequalities. Let $T\subseteq \R^n$ be a rectangular flower and consider an arbitrary NC inequality (linear form)
$\sum_{A_i\in \calA}\pi_{A_i}\ge \sum_{B_j\in\calB}\pi_{B_j}$.
By definition of a NC inequality,
there exists an one-to-one mapping $f$ that maps $(A_i,t) (t\in A_i)$ to $(B_j,s)=f(A_i,t)$, such that $t=s$ and $A_i\subseteq B_j$.
By the definition of a rectangular flower, 
for any $x\in S\subseteq S' \subseteq [n]$, we have $\log T_{S,x} \geq \log T_{S',x}$. Let $\pi(T)_{S,x}=\log T_{S,x}$, we have
\[\sum_{A_i\in \calA}\pi(T)_{A_i}=\sum_{A_i\in \calA}\sum_{x\in A_i}\pi(T)_{A_i,x}\ge \sum_{A_i\in \calA}\sum_{x\in A_i}\pi(T)_{f(A_i,x)}=\sum_{B_i\in \calB}\sum_{x\in B_j}\pi(T)_{B_j,x}=\sum_{B_j\in\calB}\pi(T)_{B_j},\]
where the second equality holds because $f$ is an one-to-one mapping. This proves that $\RF\subseteq \NC$.

We next show that $\NC\subseteq \RF$.
Suppose for contradiction that there is a point $\pi^*\in \NC$ such that $\pi^* \not\in \RF$.
Since $\RF$ is a closed polyhedral cone, there is a hyperplane $\sum_{S\subseteq [n]} \alpha_S \pi_{S}= 0$ separating
$\RF$ and $\pi^*$ (with $\sum_{S\subseteq [n]} \alpha_S \pi^*_{S}> 0$).
So $\sum_{S\subseteq [n]} \alpha_S \pi_{S} \leq  0$  is not an FNC inequality (since $\pi^*\in \NC$ should satisfy all FNC inequalities).
From the proof Theorem~\ref{thm:nonuniform}, we have shown that for any non-FNC inequality, there exists a rectangular flower that violates it.
This contradicts to the fact that $\sum_{S\subseteq [n]} \alpha_S \pi_{S}\leq 0$ for all $\pi\in \RF$.
Hence, $\NC\subseteq \RF$.
This concludes the proof for Theorem \ref{thm:recflower}.
\qed

At the end of this section, we briefly discuss projection inequalities with nonzero constant terms, e.g., $\sum_S \alpha_S \pi_S \leq \beta$, for $\beta\ne 0$.
If $\beta<0$, none of such inequality is true by considering the unit hypercube.
If $\beta> 0$, whether or not the inequality is true entirely depends on the correctness of its homogeneous counterpart, the inequality $\sum_S \alpha_S \pi_S \leq 0$.
To see that, on one hand, if $\sum_S \alpha_S \pi_S\leq 0$ is true for all $\pi\in \Con$, 
so is $\sum_S \alpha_S \pi_S\leq \beta$ for any $\beta>0$.
On the other hand, if $\sum_S \alpha_S \pi_S\leq \beta$ is true for some $\beta>0$, $\sum_S \alpha_S \pi_S\leq 0$ must also be true (we will prove this fact in Section~\ref{sec:nonconvex}). Therefore, it suffices to consider only those inequalities with zero constant term.

\section{Proof of Theorem~\ref{thm:nonconvex}: Non-Convexity of $\Con$}
\label{sec:nonconvex}

In this section we prove Theorem~\ref{thm:nonconvex}. We prove by contradiction that the constructible region $\Con$ for $n\geq4$ is non-convex. %Assume the contrast that $\Psi_4$ is convex. 
Before that, we need some definitions.
A hyperplane of the form $\langle c,\pi\rangle= b$ in $\R^{2^n-1}$ is called a \emph{supporting hyperplane} of $\Psi_n$ if it has non-empty intersection with the closure of $\Psi_n$ and all points in $\Psi_n$ lie on the same side of the hyperplane (namely, either all $\pi\in \Psi_n$ satisfy $\langle c,\pi\rangle\le b$ or all $\pi\in \Psi_n$ satisfy $\langle c,\pi\rangle\ge b$).
We first show that each supporting hyperplane of $\Con$ must pass through the origin.

%Suppose $\Con$ is a polyhedral cone in $\mathbb{R}^{2^{n}-1}$. 
%Since polyhedral cones are finitely generated, 
%we could represent a vector in $\Con$ as the non-negative combination of its spanning vectors in $\mathbb{R}^{2^{n}-1}$.

\begin{proposition}\label{prop:hyperplane}
For any positive integer $n$, let $\sum_i\alpha_i\pi_{A_i}-\sum_j \beta_j\pi_{B_j}+C=0$ be a supporting hyperplane of $\Con$ such that $\alpha_{i}, \beta_{j}\geq 0$, then $C=0$.
\end{proposition}

\begin{proof}
Assume without loss of generality that all $\pi\in \Psi_n$ satisfy $\sum_i\alpha_i\pi_{A_i}-\sum_j \beta_j\pi_{B_j}+C\ge 0$. In this proof, for an object $T$ and a vector $v\in \R^n$, we define $T+v=\{x+v\mid x\in T\}$.

Consider the $n$-dimensional unit hypercube $C_n$ in $\mathbb{R}^{n}$. It can be seen that its projection onto any subspace has volume $1$, and therefore $\pi(C_n)_S=0$ for all $S\subseteq [n]$, which implies that $C\geq 0$. Via a similar argument in the proof of Theorem~\ref{thm:nonuniform}, we can show that $\sum_{i} \alpha_i \cdot \indicator[x\in A_i]=\sum_{j} \beta_j \cdot\indicator[x\in B_j]$ holds for all $x\in [n]$, since otherwise we can scale the unit hypercube along one axis by a factor of $2$ or $1/2$ (depending on which side of the above equality is larger) to disprove the inequality.
	
Assume the contrast that $C>0$. 
Since $\sum_i\alpha_i\pi_{A_i}-\sum_j \beta_j\pi_{B_j}+C=0$ is a supporting hyperplane of $\Con$, we know that there exists an object $T$ such that $\sum_i\alpha_i\pi(T)_{A_i}-\sum_j \beta_j\pi(T)_{B_j}\leq -0.99C$ (since by definition the hyperplane has non-empty intersection with the closure of $\Con$). Denote $\Omega(T)=\sum_i\alpha_i\pi(T)_{A_i}-\sum_j \beta_j\pi(T)_{B_j}$. We first show that there exists another object $T'$ such that $\Omega(T')\leq -2C/3$. Moreover, $T'$ can be represented as the union of a set of unit hypercubes in $\mathbb{R}^{n}$.
	
For any $\varepsilon>0$ and an $n$-dimensional integral vector $v=(v_1,\cdots,v_n)$, we define $C^{\epsilon}_{v}$ to be the hypercube 
$C^{\epsilon}_{v}=[v_1\epsilon,(v_1+1)\epsilon]\times[v_2\epsilon,(v_2+1)\epsilon]\times\cdots\times[v_n\epsilon,(v_n+1)\epsilon]$ of size $\epsilon$.
And we define 
$$T(\varepsilon)=\bigcup_{v : C^{\epsilon}_{v}\cap T\ne\emptyset}C^{\epsilon}_{v}.$$
Intuitively, $T(\varepsilon)$ is a ``union-of-hypercube $\epsilon$-approximation'' of $T$.
It is not hard to see that for all $S\subseteq [n]$, we have $\lim_{\varepsilon\rightarrow 0}\pi(T(\varepsilon))_{S}=\pi(T)_{S}$. Therefore, there exists some small enough $\epsilon^*>0$ such that $\Omega(T(\epsilon^*))\leq -2C/3$. 
Now let $T^{'}=\frac{1}{\epsilon^*}T(\epsilon^*)=\{\frac{1}{\epsilon^*}\cdot x\mid x\in T(\epsilon^*)\}$. It can be seen that $T'$ is the union of a set of unit hypercubes in $\mathbb{R}^{n}$. We write $T'=\bigcup_{v\in I}C^1_v$ where $I$ is a set of $n$-dimensional integral vectors. Moreover, we have that
\begin{equation}
\begin{aligned}
\Omega(T')=&\sum_i\alpha_i\pi(T')_{A_i}-\sum_j \beta_j\pi(T')_{B_j}\\
=&\sum_i\alpha_i\pi(T(\epsilon^*))_{A_i}+\log\frac{1}{\epsilon^*}\cdot\left(\sum_{i} \alpha_i |A_i|\right)-\sum_j \beta_j\pi(T(\epsilon^*))_{B_j}-\log\frac{1}{\epsilon^*}\cdot\left(\sum_{j} \beta_j |B_j|\right)\\
= & \text{ }\Omega(T(\delta))\leq -2C/3,
\end{aligned}
\end{equation}
where the last equality comes from $\sum_{i} \alpha_i \cdot \indicator[x\in A_i]=\sum_{j} \beta_j \cdot\indicator[x\in B_j]$ for all $x\in [n]$.

We now construct an object $T''$ such that $\pi(T'')_{S}=2\pi(T')_{S}$ for all $S\subseteq [n]$. Note that this implies $\Omega(T'')=2\cdot\Omega(T')\leq -4C/3<-C$, which is a contradiction to the assumption that all constructible vectors $\pi$ satisfy $\sum_i\alpha_i\pi_{A_i}-\sum_j \beta_j\pi_{B_j}+C\ge 0$. Let $m$ be a sufficiently large integer such that $T'\subseteq [0,m]^{n}$. Now we define
\[T''=\bigcup_{v\in I} (T'+m\cdot v).\]
Intuitively, $T''$ is the union of $|T'|$ copies of $T'$ that are placed ``in the same shape'' of $T'$. See Figure~\ref{fig:double} for an example.
It is not hard to see that, for each $S\subseteq [n]$, $T''_{S}$ consists of $|T'_{S}|$ disjoint copies of $T'_{S}$ that are placed ``in the same shape'' of $T'_S$. Therefore, for each $S\subseteq [n]$, we have that $|T''_{S}|=|T'_{S}|\cdot |T'_{S}|$ and $\pi(T'')_{S}=2\pi(T')_{S}$. This completes the description of the construction of $T''$, and also completes the proof by contradiction of the proposition.
\end{proof}

\begin{figure}[h]
	\centering
\begin{tikzpicture}

\draw[black, very thick] (0,0)--(0,1)--(2,1)--(2,0)--(0,0);
\draw[black, very thick] (0,0)--(1,0)--(1,2)--(0,2)--(0,0);

\node at (1,-0.5) {$T'$};

\draw[black, very thick] (5,0)--(5,1)--(7,1)--(7,0)--(5,0);
\draw[black, very thick] (5,0)--(6,0)--(6,2)--(5,2)--(5,0);

\draw[black, very thick] (7,0)--(7,1)--(9,1)--(9,0)--(7,0);
\draw[black, very thick] (7,0)--(8,0)--(8,2)--(7,2)--(7,0);

\draw[black, very thick] (5,2)--(5,3)--(7,3)--(7,2)--(5,2);
\draw[black, very thick] (5,2)--(6,2)--(6,4)--(5,4)--(5,2);
\draw [ultra thick, draw=black, fill=gray, opacity=1]
(6,1) -- (6,2) -- (7,2) -- ((7,1) -- cycle;
\node at (7,-0.5) {$T''$};
\end{tikzpicture}
\caption{(i) An object $T'\subseteq \R^2$ consists of $3$ unit squares.
(ii) The object $T''\subseteq\R^2$ defined according to $T'$ consists of $9$ unit squares. There is no square in the dark area.}
\label{fig:double}
\end{figure}
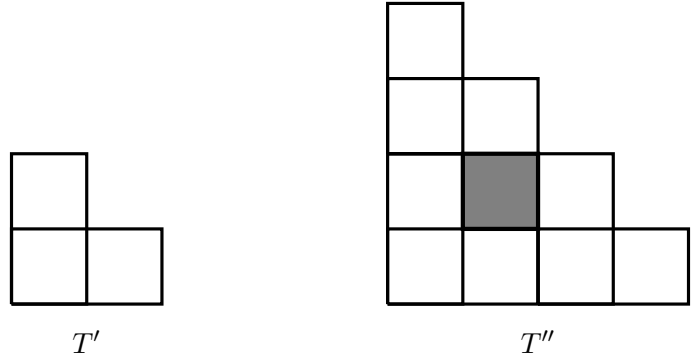

Now, for $n\ge 4$, consider the projection of $\Con$ onto the subspace spanned by coordinate vectors $e_{\{1,2\}}$, $e_{\{1,3\}}$, $e_{\{2,3\}}$, $e_{\{2,4\}}$, $e_{\{3,4\}}$, $e_{\{1,2,3\}}$ and $e_{\{2,3,4\}}$:
\[\Pi_{n}=\bigg\{\big(\pi(T)_{\{1,2\}},\pi(T)_{\{1,3\}},\pi(T)_{\{2,3\}},\pi(T)_{\{2,4\}},\pi(T)_{\{3,4\}},\pi(T)_{\{1,2,3\}},
\pi(T)_{\{2,3,4\}}\big)\text{ }\bigg|\text{ } T\subseteq \mathbb{R}^{n}\bigg\}.\]

Let $b$ be a real number and $v\in \R^7$ be a real vector, whose coordinates are indexed by sets in ${\{1,2\}}$, ${\{1,3\}}$, ${\{2,3\}}$, ${\{2,4\}}$,${\{3,4\}}$, ${\{1,2,3\}}$, and ${\{2,3,4\}}$ respectively. Define $\tilde{v}\in \R^{2^n-1}$ to be the extension of the $7$-dimensional vector $v$, such that $\tilde{v}_I=v_I$ for $I$ being each one of the above $7$ sets indexing the coordinates of $v$, and $\tilde{v}_I=0$ for other $I$.
It is not hard to see that for each supporting hyperplane $\langle v,\omega \rangle=b$ of $\Pi_{n}$, its ``extension'' $\langle \tilde{v},\pi \rangle=b$ must be a supporting hyperplane of $\Con$. Combine this with Proposition~\ref{prop:hyperplane}, we know that such a supporting hyperplane of $\Pi_{n}$ must pass through the origin as well, namely $b=0$. Now assume without loss of generality that $\langle \tilde{v},\pi \rangle\le 0$ is satisfied by all $\pi\in \Con$, we know from Theorem~\ref{thm:nonuniform} that the inequality $\langle \tilde{v},\pi \rangle\le 0$ must be an FNC inequality. Additionally, in this FNC inequality the only terms with non-zero coefficients are $\pi_{\{1,2\}}$, $\pi_{\{1,3\}}$, $\pi_{\{2,3\}}$, $\pi_{\{2,4\}}$, $\pi_{\{3,4\}}$, $\pi_{\{1,2,3\}}$ and $\pi_{\{2,3,4\}}$.

Now we show that any FNC inequality involving only the above terms 
is a nonnegative linear combination of the following BT inequalities (the linear form):
\begin{eqnarray}
\pi_{\{1,2\}}+\pi_{\{1,3\}}+\pi_{\{2,3\}} \geq 2\cdot\pi_{\{1,2,3\}},\label{bt:1}\\
\pi_{\{2,3\}}+\pi_{\{3,4\}}+\pi_{\{2,4\}} \geq 2\cdot\pi_{\{2,3,4\}}.\label{bt:2}
\end{eqnarray}

By Lemma~\ref{lm:FNC-NC}, it suffices to consider NC inequalities with only terms with non-zero coefficients being $\pi_{\{1,2\}}$, $\pi_{\{1,3\}}$, $\pi_{\{2,3\}}$, $\pi_{\{2,4\}}$, $\pi_{\{3,4\}}$, $\pi_{\{1,2,3\}}$, and $\pi_{\{2,3,4\}}$.
It is not hard to see from property P2 that the RHS of such an NC inequality may only contain terms $\pi_{\{1,2,3\}}$ and $\pi_{\{2,3,4\}}$. According to Corollary~\ref{cor:BTcomb} of the Exact Single Cover Theorem (which will be proved in the next section), such NC inequality must be a nonnegative combination of \eqref{bt:1} and \eqref{bt:2}. If $\Con$ is convex, then $\Pi_{n}$ is also convex, and therefore $\Pi_{n}$ is characterized by \eqref{bt:1} and \eqref{bt:2}. 
However, we show that this is not true by analyzing a variation of an example first proposed in \cite{bollobas1995projections}. Specifically, we consider the vector $\omega=(0,2,0,2,0,1,1)$. It is easy to verify that $\omega$ satisfies \eqref{bt:1} and \eqref{bt:2}. We will show that $\omega\notin \Pi_{n}$. Before that, we need some definitions.
Let $\calA$ be a $k$-cover of $[n]$, then $\calA$ defines a partition of $[n]$ into equivalence classes in the way that two elements being in the same equivalence class iff they lie in exactly the same subsets in $\calA$.
We state and use the following Theorem (Theorem 4 from \cite{bollobas1995projections}, restated here). 
\begin{theorem}\label{thm:4in3}%(\textbf{Theorem 4 in \cite{bollobas1995projections}})
Let $T$ be an object in $\mathbb{R}^{n}$, and let $\calA$ be a $k$-cover of $[n]$ such that $\prod_{A\in \calA}|T_{A}|=|T|^{k}$.
Then $T=\prod T_{E}$, the product being over all equivalence classes of the cover $\calA$.
\end{theorem}
Suppose there exists an object $T$ whose log-projection vector is consistent with $\omega$ on the coordinates indexed by the sets ${\{1,2\}}$, ${\{1,3\}}$, ${\{2,3\}}$, ${\{2,4\}}$, ${\{3,4\}}$, ${\{1,2,3\}}$ and ${\{2,3,4\}}$.
That is,
$|T_{\{1,2\}}|=|T_{\{2,3\}}|=|T_{\{3,4\}}|=1$, $|T_{\{1,3\}}|=|T_{\{2,4\}}|=4$, and $|T_{\{1,2,3\}}|=|T_{\{2,3,4\}}|=2$.
It can be seen that $|T_{\{1,2\}}|\cdot |T_{\{1,3\}}|\cdot |T_{\{2,3\}}|=|T_{\{1,2,3\}}|^{2}$. From Theorem~\ref{thm:4in3}, we know that $T_{\{2,3\}}$ must be the Cartesian product of its one-dimensional projections where $|T_{\{2\}}|=1/2$ and $|T_{\{3\}}|=2$. Similarly, we also have $|T_{\{2,3\}}|\cdot |T_{\{2,4\}}|\cdot |T_{\{3,4\}}|=|T_{\{2,3,4\}}|^{2}$. Therefore, according to Theorem~\ref{thm:4in3}, we know that $T_{\{2,3\}}$ must be the Cartesian product of its one-dimensional projections where $|T_{\{2\}}|=2$ and $|T_{\{3\}}|=1/2$. This causes a contradiction. Therefore, $\omega\notin \Pi_{n}$ and $\Pi_{n}$ is not characterized by \eqref{bt:1} and \eqref{bt:2}. As a result, $\Con$ is not convex and this completes the proof of Theorem~\ref{thm:nonconvex}.

\section{Counterexample Construction for NC$\setminus$BT Inequalities}
\label{sec:counterex}

We have shown that the constructible region $\Con$ cannot be fully characterized by a set of linear inequalities as it is non-convex.
However, it is still interesting to see what is the set of all correct linear inequalities (inequalities that are true for all $\pi\in \Con$).
Equivalently, we want to figure out the set of linear inequalities that define $\Conv(\Con)$, the convex hull of $\Con$.

In this section, we construct counterexamples for several NC but non-BT (denoted as NC$\setminus$BT) inequalities.
Note that a compact object can be arbitrarily approximated by the union of unit hypercubes, we consider such objects in
our counterexamples. In this subsection, we use a $n$-tuple $\textbf{t}=(t_1,t_2,\ldots,t_n)$ where $\{t_i\}_{i\in [n]}$ are non-negative integers to represent the
$n$-dimensional unit hypercube $[t_1,t_1+1]\times[t_2,t_2+1]\times\cdots\times[t_n,t_n+1]$.

%%%%%%%%%%%%%%%%%%%%%%%%%%%%%%%%%%%%%%%%%%%%%%%%%%%%%%%%%%%%%%%%%

\subsection{Skeleton}

%Denote the sum of two sets by their Minkowski sum, namely $A+B=\{a+b\mid a\in A, b\in B\}$.
We need the notion of a \emph{skeleton}, which is central to our counterexample construction.

\begin{definition}
Let $G=(V,E)$ be an undirected graph where $V=\{v_1,\cdots,v_n\}$ and let $C\subseteq V, C\ne\emptyset$ be such that the induced subgraph $G[C]$ is a clique. Let $M$ be a positive integer. We define $\SK_{M}(G,C)$, the \emph{skeleton} of $C$ on $G$ with parameter $M$, as the union of all unit hypercubes $\textbf{t}$ such that $\forall i\in C, 0\le t_i\le M-1$ and $\forall i\notin C, t_i=0$.
Let $C_1,C_2,\cdots,C_s\subseteq V$ be all non-empty maximal (inclusion-wise) cliques in $G$.
The \emph{skeleton} of $C$ in $G$ with parameter $M$ is defined as
\[
\SK_{M}(G)=\bigcup_{r=1}^{s} \SK_{M}(G,C_r).
\]
See Figure~\ref{fig:flower2} for an example.
\end{definition}

Consider an arbitrary FNC inequality $\prod_{\calA}x^{\alpha_i}_{A_i}\ge \prod_{\calB}x^{\beta_j}_{B_j}$. In order to disprove this inequality, we construct a skeleton with large RHS value and small LHS value. To this end, we need the following definition for a FNC inequality.

\begin{definition}
We define the \emph{connection graph} for the above inequality to be
an undirected graph $G=(V,E)$ where $V=\{v_1,\cdots,v_n\}$, 
and the edge $(v_x,v_y)\in E$ iff $x$ and $y$ appear simultaneously in some $B_j$ but not in any $A_i$.
\end{definition}

\begin{figure}[h]
	\centering
	\includegraphics[width=0.7\linewidth]{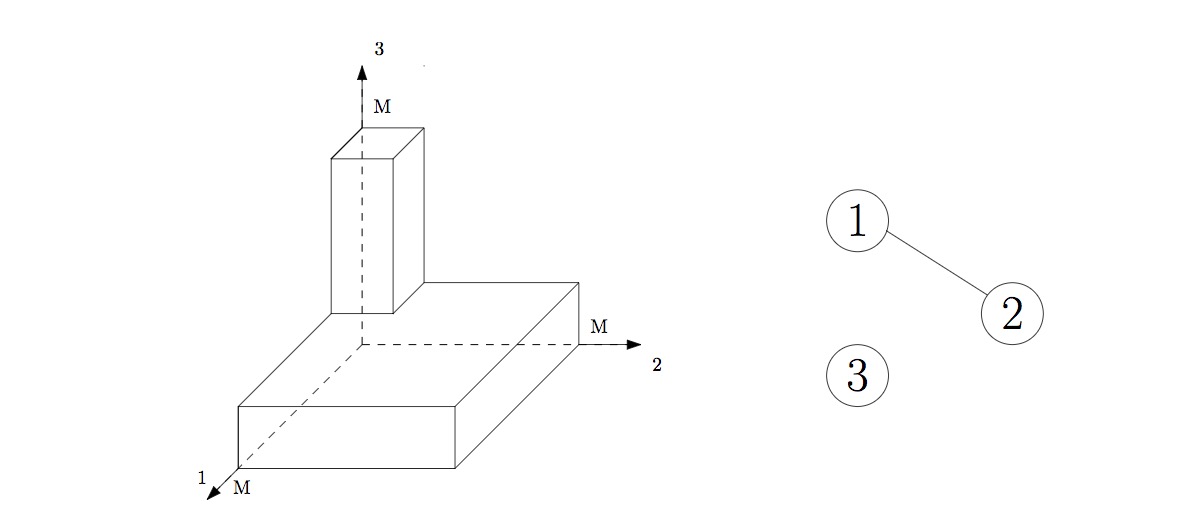}
	\caption{(i) Skeleton.
		(ii) Graph $G$.}
	\label{fig:flower2}
\end{figure}

Let graph $G=(V,E)$ be the connection graph for the FNC inequality $\prod_{\calA}x^{\alpha_i}_{A_i}\ge \prod_{\calB}x^{\beta_j}_{B_j}$. We analyze whether the skeleton $\SK_{M}(G)$ defined above is a counterexample for the FNC inequality, for sufficiently large $M$. For a subset of vertices $S\subseteq V$,
let $\omega(S)$ be the size of the maximum clique in the induced subgraph $G[S]$.
As $M$ goes to $+\infty$,
we have the following asymptotic estimations:
\[
\prod_{(A_i,\alpha_i)\in \calA}|\SK_{M}(G)_{A_i}|^{\alpha_i}=\Theta\left(M^{\sum_{i=1}^{k}\alpha_i}\right);
\text{   }\prod_{(B_j,\beta_j)\in\calB}|\SK_{M}(G)_{B_j}|^{\beta_j}=\Theta\left(M^{\sum_{j=1}^{m}\beta_j\cdot\omega(B_j)}\right).\]

The following lemma is a direct consequence of the above estimations.
\begin{lemma}
If the following inequality holds for a FNC inequality
$\prod_{\calA}x^{\alpha_i}_{A_i}\ge \prod_{\calB}x^{\beta_j}_{B_j}$,
\[
\sum_{(A_i,\alpha_i)\in \calA}\alpha_i<\sum_{(B_j,\beta_j)\in\calB}\beta_j\cdot\omega(B_j).
\]
the FNC inequality is incorrect, i.e., there exists a counterexample to it.
\end{lemma}

\begin{example}
\label{ex:1}
Consider the NC inequality
$x_{\{1,2\}}\cdot x_{\{2,3\}}\cdot x_{\{3,4\}} \geq x_{\{1,2,3\}}\cdot x_{\{2,3,4\}}$.
The edge set of its connection graph $G$ is $\{(1,3), (2,4)\}$.
	We have $\sum_i\alpha_i=3$ and $\sum_j \beta_j \cdot\omega(B_j)=4$.
	Hence, the inequality is not true in general.
\end{example}

%%%%%%%%%%%%%%%%%%%%%%%%%%%%%%%%%%%%%%%%%%%%%%%%%%%%%%%%%%%%%%%%%

\subsection{Union of Boxes}
Let $\mathbf{b}$ be an $n$-dimensional vector with all coordinates being positive numbers.
We define a box $B(\mathbf{b})$ to be a hypercube $B(\textbf{b})=\{\mathbf{x}\mid 0\le x_i\le b_i\}$.
In this subsection we denote the sum of two sets by their Minkowski sum, namely $A+B=\{a+b\mid a\in A, b\in B\}$. 
For example, let $\mathbf{1}$ be the all-one vector, then $B(\mathbf{b})+\mathbf{1}=\{\mathbf{x}\mid 1\le x_i\le b_i+1\}$.

The objects in this subsection are
the disjoint union of two boxes $B_1$ and $B_2$.
Here we require not only $B_1$ and $B_2$ are disjoint in $\mathbb{R}^n_{+}$,
but for any $S\subseteq [n]$, their projections onto subspace $\mathbb{R}^S$ are disjoint as well.
In particular, we let $T$ be the union of the following two boxes:
\[
B_1=B(\textbf{1});\text{ }B_2=B(M^{t_1},M^{t_2},\cdots, M^{t_n})+\textbf{1},
\]
where $t_i\in \mathbb{R}$ and $M>0$. 

Consider the FNC inequality $\prod_{(A_i,\alpha_i)\in \calA}x^{\alpha_i}_{A_i}\ge \prod_{(B_j,\beta_j)\in \calB}x^{\beta_j}_{B_j}$.  As $M$ goes to $+\infty$,
we have the following asymptotic estimations:
\[
\prod_{(A_i,\alpha_i)\in \calA}|T_{A_i}|^{\alpha_i}=\Theta\left(M^{\sum_{i=1}^{k}\alpha_i \cdot\max\{0,\sum_{s\in A_i} t_s\}}\right),
\text{ }
\prod_{(B_j,\beta_j)\in \calB}|T_{B_j}|^{\beta_j}=\Theta\left(M^{\sum_{j=1}^{m}\beta_j \cdot\max\{0,\sum_{s\in B_j} t_s\}}\right),
\]
and the following lemma.

\begin{lemma}
\label{lm:unionbox}
If there exist real numbers $t_1,\cdots,t_n$ such that the following inequality holds for a FNC inequality
$\prod_{\calA}x^{\alpha_i}_{A_i}\ge \prod_{\calB}x^{\beta_j}_{B_j}$,
\[
\sum_{(A_i,\alpha_i)\in \calA}\alpha_i \cdot\bigg|\sum_{s\in A_i} t_s\bigg|<\sum_{(B_j,\beta_j)\in \calB}\beta_j\cdot\bigg|\sum_{s\in B_j} t_s\bigg|.
\]
then the FNC inequality is incorrect.
\end{lemma}
\begin{proof}
Our counterexample is $T$ as stated before, which is the union of two boxes
$B_1=B(\mathbf{1}),B_2=B(M^{t_1},M^{t_2},\cdots, M^{t_n})+\mathbf{1}$.
Since for any real number $a$, we have $2\cdot\max\{a,0\}=a+|a|$.
According to the above asymptotic estimations and the following equation that follows from C1 in the definition of a FNC inequality: 
\[\sum_{(A_i,\alpha_i)\in \calA}\sum_{s\in A_i} \alpha_i\cdot t_s=\sum_{(B_j,\beta_j)\in \calB}\sum_{s\in B_j} \beta_j\cdot t_s,\] 
we conclude  that
\[
\sum_{(A_i,\alpha_i)\in \calA}\alpha_i \cdot\max\bigg\{0,\sum_{s\in A_i} t_s\bigg\} <
\sum_{(B_j,\beta_j)\in \calB}\beta_j \cdot\max\bigg\{0,\sum_{s\in B_j} t_s\bigg\}.
\]
Hence,
the object $T$ is a counterexample.
\end{proof}

\begin{example}
Again, consider the NC inequality in Example~\ref{ex:1}.
Let $(t_1,t_2,t_3,t_4)=(1,-1,1,-1)$.
We can see the condition of Lemma~\ref{lm:unionbox} is satisfied and therefore the inequality is incorrect.
\end{example}

\begin{example}
\label{ex:2}
Consider the NC inequality
$x_{\{1,3\}}\cdot x_{\{2,3\}}\cdot x_{\{1,2,4\}} \geq x_{\{1,2,3\}}\cdot x_{\{1,2,3,4\}}$. 
Let $(t_1,t_2,t_3,t_4)=(-1,-1,1,2)$.
The condition of Lemma~\ref{lm:unionbox} is satisfied and therefore the inequality is incorrect.
\end{example}

%%%%%%%%%%%%%%%%%%%%%%%%%%%%%%%%%%%%%%%%%%%%%%%%%%%%%%%%%%%%%%%%%

\subsection{Exact Single Cover Theorem}

Recall that the only class of correct inequalities we know so far, the class of BT inequalities (e.g.,
$|T|^k\le \prod_{A\in \calA} |T_A|$), satisfy that the family of index subsets $\calA$ forms a uniform-cover of $[n]$.
Although we are unable to prove that all correct inequalities are BT inequalities or its nonnegative combinations, we prove the following weaker theorem
which is a necessary condition for an FNC inequality to be true for all constructible vectors, using the union of boxes method.
Intuitively, it says that any single set $B_j\in \calB$ in a correct FNC inequality can be ``uniformly covered'' by a ``sub-family'' of $\calA$.

In this subsection, we let $\mathbf{a}_i\in\{0,1\}^n$ be the 0/1 indicator vector for set $A_i$ and $\mathbf{b}_j\in\{0,1\}^n$ for $B_j$, i.e.,
$\mathbf{a}_{it}=1$ if and only if $t\in A_i$.

\begin{theorem}(Exact Single Cover Theorem)
\label{thm:singlecover}
If the FNC inequality 
$\prod_{\calA}x^{\alpha_i}_{A_i}\ge \prod_{\calB}x^{\beta_j}_{B_j}$
is true for all constructible vectors,
then for each $(B_j,\beta_j)\in \calB$, there exist real numbers $c_{1},c_{2},\cdots,c_{|\calA|}$ such that $0\le c_i\le \alpha_i$ for all $i$ and 
	$$
	\sum_{(A_i,\alpha_i)\in \calA}c_{i} \mathbf a_{i}=\beta_{j} \mathbf b_j.
	$$	
\end{theorem}

\begin{proof}
	
Let $K=\{\sum_{(A_i,\alpha_i)\in \calA}c_{i} \mathbf{a}_i\mid 0\leq c_{i}\leq \alpha_{i}, \forall i\in [k]\}$.
It is not hard to see that $K$ is a closed convex subset of $\mathbb{R}^{n}$.
Assume the contrary that $\beta_{j}\textbf b_j\notin K$, then by the separating hyperplane theorem, we know that there exists a vector $\mathbf{t}\in \mathbb{R}^n$ and a real number $a$ such that
\[
\langle\mathbf {t},\mathbf{x}\rangle < a, \forall \textbf{x}\in K \quad \text{but}\quad
\beta_{j}\cdot\langle\mathbf {t},\mathbf{b}_j\rangle >a.
\]
Let object $T$ be the union of the following two boxes:
	%We require the two boxes to be disjoint as before, not only disjoint in $\mathbb{R}^n_{+}$, but also disjoint in any projection.
	\[
	B_1=B(\textbf{1});\text{  }B_2=B(M^{t_1},M^{t_2},\cdots, M^{t_n})+\textbf{1}.
	\]
	We can see that
	\[
	\prod_{(B_j,\beta_j)\in \calB}|T_{B_j}|^{\beta_j} \geq M^{\beta_{j}\cdot\langle\mathbf {t},\mathbf{b}_j\rangle}>M^a.
	\]
%For obtaining a contradiction, it suffices to show that
%$\prod_{(A_i,\alpha_i)\in \calA}|T_{A_i}|^{\alpha_i}<M^a.$
On the other hand, using the asymptotic estimations in the last subsection, as $M$ goes to $+\infty$, we have that
	\[
	\prod_{(A_i,\alpha_i)\in \calA}|T_{A_i}|^{\alpha_i}=\Theta\left( M^{\sum_{(A_i,\alpha_i)\in \calA}\alpha_i \cdot\max\{0,\sum_{s\in A_i} t_s\}}\right)
	=\Theta\left( M^{\sum_{i:\langle\mathbf{a}_i, \mathbf{t}\rangle\ge 0}\alpha_{i} \cdot\langle\mathbf{a}_{i}, \mathbf{t}\rangle}\right)
	<M^{a},
	\]
where the last inequality holds since
$\sum_{i:\langle\mathbf{a}_i, \mathbf{t}\rangle\ge 0}\alpha_{i} \mathbf{a}_{i}\in K$. However, by definition we know that $\langle\mathbf {t},\mathbf{x}\rangle < a, \forall \textbf{x}\in K$.
This contradicts the assumption that $\beta_{j}\textbf b_j\notin K$ and finishes the proof of Theorem~\ref{thm:singlecover}.
\end{proof}

We now present two simple corollaries of Theorem~\ref{thm:singlecover}.

\begin{corollary}
Suppose the FNC inequality
$\prod_{\calA}x^{\alpha_i}_{A_i}\ge \prod_{\calB}x^{\beta_j}_{B_j}$
is true for all constructible vectors,
and the indicator vectors $\{\mathbf{a}_i\}_{A_i\in\calA}$ are linearly independent.
Then this inequality is a
nonnegative combination of at most $|\calB|$ BT inequalities.
\end{corollary}
\begin{proof}
Let $\mathbf{A}$ ($\mathbf{B}$ resp.) be
the matrix with $\mathbf{a}_i$ being the $i$th column ($\mathbf{b}_j$ the $j$th column).
Let $\mathbf{\alpha}=(\alpha_1,\cdots, \alpha_{|\calA|})^T$ and
$\mathbf{\beta}=(\beta_1,\cdots, \beta_{|\calB|})^T$.
By the definition of a FNC inequality, we know that
$
\mathbf{A\alpha}=\mathbf{B\beta}.
$
For each $j$, from Theorem~\ref{thm:singlecover} we know that $\beta_j\mathbf{b}_j=\sum_{i} c_{ji} \mathbf{a}_i$ for some set of coefficients $\{c_{ji}\}_{i}$ such that $0\leq c_{ji}\leq \alpha_i$ for all $i$.
So $\mathbf{A\alpha}= \mathbf{A}(\sum_j \mathbf{c}_j)$, where $\mathbf{c}_j=(c_{j1},\cdots, c_{jk})^T$.
Since $A$ has full column rank, we have that $\alpha= \sum_j \mathbf{c}_j$. This shows that
\[\prod_{(A_i,\alpha_i)\in\calA}x^{\alpha_i}_{A_i}=\prod_{(B_j,\beta_j)\in \calB}\left(\prod_{(A_i,\alpha_i)\in \calA}x^{c_{ji}}_{A_i}\right)
\ge \prod_{(B_j,\beta_j)\in \calB}x^{\beta_j}_{B_j},\]
namely, the FNC inequality is a
nonnegative combination of at most $|\calB|$ BT inequalities.
\end{proof}

\begin{corollary}
\label{cor:BTcomb}
Suppose the FNC inequality
$\prod_{\calA}x^{\alpha_i}_{A_i}\ge \prod_{\calB}x^{\beta_j}_{B_j}$
is true for all constructible vectors,
and $m=|\calB|=1$ or $2$.
Then this inequality is
a nonnegative combination of $m$ BT inequalities.
\end{corollary}
\begin{proof}
The case $m=1$ follows from Theorem~\ref{thm:BT}.
We only need to consider the case $m=2$.
From Theorem~\ref{thm:singlecover}, we know that
$\beta_1 \mathbf{b}_1= \sum_{i} c_i \mathbf{a}_i$ for some set of coefficients $\{c_{i}\}_{i}$ such that $0\leq c_{i}\leq \alpha_i$ for all $i$.
By the definition of a FNC inequality, we know that
$\sum_{i} \alpha_i \mathbf{a}_i=\beta_1\mathbf{b}_1+\beta_2\mathbf{b}_2$.
Therefore, we have
$\beta_2 \mathbf{b}_2= \sum_{i} (\alpha_i-c_i) \mathbf{a}_i$. Consequently,
\[\prod_{(A_i,\alpha_i)\in\calA}x^{\alpha_i}_{A_i}=\left(\prod_{(A_i,\alpha_i)\in \calA}x^{c_{i}}_{A_i}\right)\left(\prod_{(A_i,\alpha_i)\in \calA}x^{\alpha_i-c_{i}}_{A_i}\right)
\ge x^{\beta_1}_{B_1}\cdot x^{\beta_2}_{B_2}.\]
This shows the FNC inequality is a
nonnegative combination of two BT inequalities.
\end{proof}

\begin{example}
Consider the NC inequality
$x_{\{1,2\}}\cdot x_{\{2,3\}}\cdot x_{\{3,4\}} \geq x_{\{1,2,3\}}\cdot x_{\{2,3,4\}}$ in Example~\ref{ex:1}.
If it is true for all constructible vectors, then from either of the above corollaries, we know that
it can be decomposed into a combination of two BT inequalities.
However, it is clear that such a decomposition does not exist.
So it is incorrect in general.
Similarly, the inequality in Example~\ref{ex:2} is incorrect.
\end{example}

\subsection{A Hybrid Approach}
In fact, none of above methods are sufficient to disprove all NC$\setminus$BT inequalities.
In this section, we demonstrate an application of the combination of these approaches.
\begin{example}
\label{ex:3}
One interesting example is the following NC inequality:
$$
x_{\{1\}}\cdot x_{\{1,2\}}\cdot x_{\{2,3\}}\cdot x_{\{3,4\}}\cdot x_{\{2,4\}} \geq x_{\{1,2,3\}}\cdot x_{\{2,3,4\}}\cdot x_{\{1,2,4\}}.
	$$
\end{example}
The example satisfies the conclusion of Theorem~\ref{thm:singlecover},
however,
we can show it is incorrect.
Our counterexample utilizes
a combination of skeleton and union-box methods.
We observe that the given inequality is a combination of
$$
x_{\{1,2\}}\cdot x_{\{2,3\}}\cdot x_{\{3,4\}} \geq x_{\{1,2,3\}}\cdot x_{\{2,3,4\}}
\quad \text{and}\quad
x_{\{1\}}\cdot x_{\{2,4\}} \geq x_{\{1,2,4\}}.
$$
We already have a skeleton counterexample for the former inequality.
The idea to is to construct an object $T$ that is the union of the skeleton and
another box $B$ such that the values of
$|T_{\{1,2\}}|,|T_{\{2,3\}}|, |T_{\{3,4\}}|, |T_{\{1,2,3\}}|, |T_{\{2,3,4\}}|$
remain (approximately) the same as that of the skeleton,
while
$|T_{\{1\}}|\cdot |T_{\{2,4\}}| \approx |T_{\{1,2,4\}}|$.
Since the skeleton allows the LHS to be arbitrarily larger than the RHS in the former inequality,
we can see that the inequality in Example~\ref{ex:3} is disproved by this object $T$.

Specifically, let $T=\SK_{M}(G)\cup B(R^3, R^{-4}, R^{-6}, R^5)$ where $G$ is the connection graph of the former inequality. As both $M$ and $R$ goes to $+\infty$ and $M=o(R)$ along the way,
we can see that $|T_{\{1\}}|\cdot |T_{\{2,4\}}| \approx |T_{\{1,2,4\}}|\approx R^4$
but
$|T_{\{1,2\}}|\approx  M+R^{-1},
|T_{\{2,3\}}|\approx M+R^{-10}, |T_{\{3,4\}}|\approx M+R^{-1},
|T_{\{1,2,3\}}|\approx M^2+R^{-7}, |T_{\{2,3,4\}}|\approx M^2+R^{-5}$.
Therefore, 
\[|T_{\{1\}}|\cdot |T_{\{2,4\}}|\cdot|T_{\{1,2\}}|\cdot|T_{\{2,3\}}|\cdot |T_{\{3,4\}}|\approx R^4\cdot M^3<R^4\cdot M^4\approx|T_{\{1,2,3\}}|\cdot|T_{\{2,3,4\}}|\cdot|T_{\{1,2,4\}}|,\]
which disproves the NC inequality in Example~\ref{ex:3}.

$\ $

In this section we have shown via different approaches that some NC$\setminus$BT inequalities are not correct.
It remains to ask whether there is a NC$\setminus$BT inequality that is correct.
We have been unable to discover one such inequality.
We have checked (in an exhaustive manner) for all inequalities in $\R^4$ by enumerating all NC inequalities and exhaustively check them on all skeletons and union of two boxes (with some pruning).
We found out that all NC$\setminus$BT inequalities in $\R^4$ are incorrect within our enumeration. We have also found that the number of NC inequalities in $\R^5$ is overwhelming for enumeration.
Hence, we propose Conjecture~\ref{conj:1} mentioned in Section~\ref{sec:intro}.

\section{Final Remarks and Acknowledgements}

All of our counterexamples in Section~\ref{sec:counterex} are essentially combinatorial,
and the constructions allow one side of the inequality to be arbitrarily larger than the other side.
We suspect that all incorrect projection inequalities can be refuted in a similar fashion.
In other words, we may not need to construct very delicate, twisted geometric objects, 
but instead just a union of a small number of boxes and skeletons to refute
any incorrect linear projection inequality.

We have developed a few other techniques to disprove some of NC inequalities.
For example, the fitting boxes model is the combination of the two models we introduced. It consists of many boxes, each constructed according to
the connection graph.
The fitting boxes model can be used to handle all $4$-dimensional inequalities.
However, it is hard to analyze and generalize to higher dimensions, and
we decided not to introduce it here.

Jian Li proposed the notion of rectangular flowers
and suspected that $\RF=\Con$, which, if true, is a natural extension of the box theorem\footnote{
Let $T$ be an object in $\mathbb{R}^m$. The box theorem states that there is a box $B$ with $|B|=|T|$
and $|B_S|\leq |T_S|$ for all $S\subseteq [m], S\ne \emptyset$.
}
 in \cite{bollobas1995projections}.
In fact, he ``verified'' the above claim empirically using hundreds of thousands datasets
(synthetically generated from different distributions with different dimensions and parameters).
Now, we know that $\RF\subsetneq\Con$.
But it is still an interesting fact that all NC inequalities are true for many ``random-like" datasets
and there may be good mathematical reasons for it.
Moreover, our counterexamples, which appear to be quite simple in retrospect,
may not be totally obvious without realizing the equivalence between rectangular flowers
and the NC inequalities.

We would like to thank
Yuval Peres for introducing BT and Shearer's inequalities to us,
Elad Verbin and Raymond Yeung for discussing non-Shannon-type inequalities, Jian Li for help formulating the problem and polishing an early version of this paper.
In particular, we would like to thank Jeff Kahn for several discussions,
and casting a doubt in the very beginning about $\RF?=\Con$, even the convexity of $\Con$, for $n\geq 4$, despite the ``empirical evidences" we showed to him.
We also thank Dan Suciu, Uri Zwick, Gil Kalai, Ely Porat, Zizhuo Wang, Chunwei Song,
Yuan Yao, Andrew Thomason and Jacob Fox  for useful discussions. Finally, we thank the four anonymous referees and the associate editor for their constructive comments and suggestions.

\bibliography{REF.bib}

\appendix
\section{Appendix 1 ($\UC=\Con$ for $n\leq 3$)}
\label{app:threed}

In this section, we prove $\mathcal{BT}_3=\Conb_3$.
This appears to be a folklore result, and we provide a proof for completeness.
Since BT inequalities are correct for all constructible vectors, it suffices to prove that, for any vector $\pi\in\R^7$, if it satisfies all BT inequalities for $3$-dimensional objects, then $\pi$ is constructible.
In fact, we show $\UC=\NC$ for $n=3$.  Since $\Con$ is sandwiched between them, all three are the same. 
To show $\UC=\NC$ for $n=3$, it suffices to show that any NC inequality in $\mathbb{R}^{3}$ is a nonnegative linear combination of BT inequalities in $\mathbb{R}^{3}$.

Let
\begin{equation}\label{eqn:NC3}
\sum_{A_i\in\calA}\pi_{A_i}\geq \sum_{B_j\in\calB} \pi_{B_j}
\end{equation}
be a NC inequality where $\calA=\{A_i\}_{i=1}^{k}$, $\calB=\{B_j\}_{j=1}^{m}$ are two families of subsets of $\{1,2,3\}$. We assume without loss of generality that $\calA\cap\calB=\emptyset$.
First of all, $\pi_{\{1\}},\pi_{\{2\}},\pi_{\{3\}}$ can only appear on the LHS of \eqref{eqn:NC3}. Assume otherwise that $\{1\}\in \calB$ then there must exist $A_i \in \calA$ such that $A_i\subseteq\{1\}$, which means $A_i=\{1\}$, a contradiction to the assumption that $\calA\cap\calB=\emptyset$. Second, if none of $\pi_{\{1,2\}},\pi_{\{1,3\}},\pi_{\{2,3\}}$ appears on the RHS of \eqref{eqn:NC3}, then the only term on the RHS of \eqref{eqn:NC3} would be $\pi_{\{1,2,3\}}$. In this case, the inequality must be a combination of BT inequalities. Third, if there exists $B_{j}\in \calB$ such that $|B_{j}|=2$, without loss of generality, we assume that $B_{1}=\{1,2\}$. By definition of NC inequalities, there must exist a $\pi_{\{1\}}$ and a $\pi_{\{2\}}$ on the LHS of \eqref{eqn:NC3}. Note that the pre-image of $(\{1,2\},1)$ in the one-to-one mapping between $\Sigma$ and $\Lambda$ must be $(\{1\},1)$ and the pre-image of $(\{1,2\},2)$ in the one-to-one mapping between $\Sigma$ and $\Lambda$ must be $(\{2\},2)$. 
Therefore, if we remove these three terms with same multiplicity from \eqref{eqn:NC3} such that after this removal there is no $\pi_{\{1,2\}}$ on the RHS, the remaining inequality must still be a NC inequality since the one-to-one mapping naturally exists. We do the same removal if $\pi_{\{1,3\}}$ or $\pi_{\{2,3\}}$ appear on the RHS of \eqref{eqn:NC3}. It can be seen that after all such removals, the RHS of the remaining inequality only contains $\pi_{\{1,2,3\}}$. Therefore, the remaining inequality is a combination of BT inequality.
Note that the three terms removed each time naturally form a BT inequality, which means the original NC inequality is a combination of several BT inequalities. Thus, we prove that $\mathcal{BT}_3=\mathcal{NC}_3=\Conb_3$.

\end{document}